\newcommand{\I}{\mathrm{I}}
\newcommand{\E}{\mathbb{E}}
\newcommand{\Var}{\mathbb{V}\mathrm{ar}}
\numberwithin{equation}{section}
\newtheorem{theorem}{{\bf Theorem}}[section]
\theoremstyle{definition} 
\theoremstyle{plain}
\theoremstyle{definition} 
\title{Improving estimation of the volume under the ROC surface when data are missing not at random}
\author[1]{Duc-Khanh To \thanks{duckhanh.to@unipd.it}}
\author[1]{Gianfranco Adimari \thanks{gianfranco.adimari@unipd.it}}
\author[2]{Monica Chiogna \thanks{monica.chiogna2@unibo.it}}
\affil[1]{Department of Statistical Sciences, University of Padova, }
\affil[ ]{Via C. Battisti, 241; I-35121 Padova, Italy}
\affil[2]{Department of Statistical Sciences ``Paolo Fortunati'', University of Bologna, }
\affil[ ]{Via Belle Arti, 41; 40126 Bologna, Italy}
\date{ }
\providecommand{\keywords}[1]{{\small {\sc{Key words:}} Diagnostic test; Instrumental variable; Mean score equation; Nonignorable missing data mechanism; ROC analysis; Verification bias.} {\small #1}}
\begin{document}

\maketitle
\vspace{-1.2 cm}

\begin{abstract}
In this paper, we propose a mean score equation-based approach to estimate the the volume under the receiving operating characteristic (ROC) surface (VUS) of a diagnostic test, under nonignorable (NI) verification bias. The proposed approach involves a parametric regression model for the verification process, which accommodates for possible NI missingness in the disease status of sample subjects, and may use instrumental variables, which help avoid possible identifiability problems. In order to solve the mean score equation derived by the chosen verification model, we preliminarily need to estimate the parameters of a model for the disease process, but its specification is required only for verified subjects under study. Then, by using the estimated verification and disease probabilities, we obtain four verification bias-corrected VUS estimators, which are alternative to those recently proposed by \citet{toduc2019vus}, based on a full likelihood approach.  Consistency and asymptotic normality of the new estimators are established. Simulation experiments are conducted to evaluate  their finite sample performances,  and an application to a dataset from a research on epithelial ovarian cancer is presented.
\end{abstract}

\keywords{}

\section{Introduction and background}
\label{sec:intro}
The evaluation of the accuracy of diagnostic tests (or biomarkers) is an increasingly relevant issue in modern medicine. Usually, this issue is solved by means of a receiver operating characteristic (ROC) analysis. In particular, the volume under the ROC surface  (VUS) is often used for measuring the overall accuracy of a diagnostic test when the possible disease status belongs to one of three ordered categories.  Under correct ordering, values of VUS vary from 1/6, suggesting that the test is no better than chance alone, to 1, that implies a perfect test, i.e. a test that perfectly discriminates among the three categories \citep{scu}. 

In medical studies, the VUS of a new test is typically estimated through a sample of measurements obtained by some suitable sample of patients, for which the true disease status is  assessed by means of a gold standard (GS) test. However, in many cases, due to the expensiveness and/or invasiveness of the GS test, only a subset of patients undergoes disease verification. In such situations, statistical inference based only on verified subjects is typically biased. This bias is known as verification bias.

In order to correct for verification bias, the researchers must formulate some assumptions about the selection mechanism for the disease verification. When the decision to send a subject to verification may be directly based on the presumed subject's disease status, or, more generally, the selection mechanism may depend on some unobserved covariates related to disease, the missing data mechanism is called nonignorable (NI). 

Generally speaking, NI data missingness is a challenge for inference, and, specifically, the issue of correcting for NI verification bias in ROC surface analysis is still scarcely considered in the statistical literature. To the best of our knowledge, only \citet{zhang2018estimation} and \citet{toduc2019vus} gave a contribution in this direction. In particular, \citet{toduc2019vus} extended the model discussed in \citet{liu} to match the case of three-category disease status, and proposed a likelihood-based approach to derive four bias--corrected estimators of the VUS, namely, full imputation (FI), mean score imputation (MSI), 
inverse probability weighting (IPW) and  pseudo doubly robust (PDR) estimators. 

Suppose we need to evaluate the predictive ability of a new continuous diagnostic test in a context where the disease status of a patient can be described by three ordered categories, ``non--diseased'', ``intermediate'' and ``diseased'', say. Consider a sample of $n$ subjects and let $T$, ${\mathcal{D}}$ and $\mathbf{A}$ denote the test result, the disease status and a vector of covariates for each subject, respectively.  Hereafter, we assume, without loss of generality, that test values are positively associated with the severity degree of the disease. The disease status ${\mathcal{D}}$ can be modeled as a trinomial random vector ${\mathcal{D}} = (D_1, D_2, D_3)^\top$, such that $D_k$ is a Bernoulli random variable having mean $\theta_k = \Pr(D_k = 1)$ where $\theta_1 + \theta_2 + \theta_3 = 1$. Hence,  $\theta_k$ represents the probability that a generic subject, classified according to its disease status, belong to the class $k$.  The accuracy of $T$ is measured by its VUS, which is defined as 
\begin{equation}
\mu = \frac{\E \left(\I_{i\ell r}D_{1i}D_{2\ell}D_{3r}\right)}{\E \left(D_{1i}D_{2\ell}D_{3r}\right)}, \label{org:vus}
\end{equation}
where the indices $i$, $\ell$, $r$ refer to three different subjects, $\I_{i\ell r} = \I(T_i < T_\ell < T_r) + \frac{1}{2} \I(T_i < T_\ell = T_r) + \frac{1}{2}\I(T_i = T_\ell < T_r) + \frac{1}{6}\I(T_i = T_\ell = T_r)$ and $\I(\cdot)$ is the indicator function \citep{nak:04}. When the disease status $\mathcal{D}$ is available for all subjects, a natural nonparametric estimator of $\mu$ is given by
\begin{equation}
\widehat{\mu}_{\mathrm{NP}} = \frac{\sum\limits_{i=1}^{n}\sum\limits_{\ell = 1, \ell \ne i}^{n} \sum\limits_{\stackrel{r = 1}{r \ne \ell, r \ne i}}^{n}\I_{i\ell r}D_{1i}D_{2\ell}D_{3r}}{\sum\limits_{i=1}^{n}\sum\limits_{\ell = 1, \ell \ne i}^{n} \sum\limits_{\stackrel{r = 1}{r \ne \ell, r \ne i}}^{n} D_{1i}D_{2\ell}D_{3r}}.
\label{nonp:vus}
\end{equation}
Suppose now that the disease status ${\mathcal{D}}$ is missing for a subset of patients in the study.
Let $V_i$ be the verification status for the $i$-th subject in study, that is $V_i = 1$ if $\mathcal{D}_i$ is observed and $V_i = 0$ otherwise. To account for a possible nonignorable missing data mechanism for the disease status, \citet{toduc2019vus} considered the verification model 
\begin{equation}
\pi = \Pr(V = 1|T, \mathbf{A}, D_1, D_2) = \frac{\exp\left(\beta_0 + \beta_1 T + \boldsymbol{\beta}_2 ^\top \mathbf{A} + \lambda_1 D_1 + \lambda_2 D_2 \right)}{1 + \exp\left(\beta_0 + \beta_1 T + \boldsymbol{\beta}_2 ^\top \mathbf{A} + \lambda_1 D_1 + \lambda_2 D_2 \right)},
\label{veri:model:1}
\end{equation}
where the parameters $\lambda_1$ and $\lambda_2$ describe the type of missingness mechanism. In particular, the missing data mechanism is ignorable (MAR) if $\lambda_1 = \lambda_2 = 0$; NI, otherwise. Following \citet{liu}, to estimate the parameter $\boldsymbol{\phi} = (\beta_0, \beta_1, \boldsymbol{\beta}^\top_2,\lambda_1, \lambda_2)^\top$ in the verification model (\ref{veri:model:1}), \citet{toduc2019vus} proposed to use a likelihood-based approach. Specifically, the authors assumed that the disease process  follows a multinomial logistic model for the whole sample, i.e., 
\begin{equation}
\rho_{k} = \frac{\exp\left(\eta^{(k)}_0 + \eta^{(k)}_1 T + \mathbf{A}^\top \boldsymbol{\eta}^{(k)}_2 \right)}{1 + \exp\left(\eta^{(1)}_0 + \eta^{(1)}_1 T + \mathbf{A}^\top \boldsymbol{\eta}^{(1)}_2 \right) + \exp\left(\eta^{(2)}_0 + \eta^{(2)}_1 T + \mathbf{A}^\top\boldsymbol{\eta}^{(2)}_2 \right)},
\label{dise:model}
\end{equation}
where $\rho_{k} = \Pr(D_k = 1| T, \mathbf{A}), k = 1,2$. Then, they jointly considered the verification model (\ref{veri:model:1}) and the disease model (\ref{dise:model}) to construct the (observed) log-likelihood function, $\ell(\boldsymbol{\phi}, \boldsymbol{\eta})$ say, where $\boldsymbol{\eta}= (\eta^{(1)}_0, \eta^{(1)}_1, \boldsymbol{\eta}^{(1)\top}_2, \eta^{(2)}_0, \eta^{(2)}_1, \boldsymbol{\eta}^{(2)\top}_2)^\top$  is the vector of parameters of the disease model. The authors proved that the joint model based on (\ref{veri:model:1}) and (\ref{dise:model}) is identifiable. Then, the maximum likelihood estimator (MLE) of $(\boldsymbol{\phi}^\top, \boldsymbol{\eta}^\top)^\top$ can be obtained by maximizing $\ell(\boldsymbol{\phi}, \boldsymbol{\eta})$ or solving the corresponding score equation. Based on imputation and re-weighting methods, four estimators for each disease status can be obtained: FI, with $\widehat{D}_{ki,\mathrm{FI}} = \hat{\rho}_{ki} = \widehat{\Pr}(D_k = 1|T_i, \mathbf{A}_i)$; MSI, with $\widehat{D}_{ki,\mathrm{MSI}} = V_i D_{ki} + (1 - V_i)\hat{\rho}_{k(0)i}$, where $\hat{\rho}_{k(0)i} = \widehat{\Pr}(D_{ki} = 1|T_i, \mathbf{A}_i, V_i = 0)$; IPW, with $\widehat{D}_{ki, \mathrm{IPW}} = V_i D_{ki} / \hat{\pi}_i$ and PDR, with $\widehat{D}_{ki,\mathrm{PDR}} = V_i D_{ki}/\hat{\pi}_i - \hat{\rho}_{k(0)i}(V_i - \hat{\pi}_i)/\hat{\pi}_i$, with $\hat{\pi}_i = \widehat{\Pr}(V_i = 1|T_i, \mathbf{A}_i, D_{1i}, D_{2i})$. Therefore, bias-corrected VUS estimators are derived by using quantities $\widehat{D}_{ki,*}$ in (\ref{nonp:vus}), i.e., 
\begin{equation}
\widehat{\mu}_{*} = \frac{\sum\limits_{i = 1}^{n}\sum\limits_{\ell = 1, \ell \ne i}^{n} \sum\limits_{\stackrel{r = 1}{r \ne \ell, r \ne i}}^{n}\I_{i\ell r}\widehat{D}_{1i,*}\widehat{D}_{2\ell,*}\widehat{D}_{3r,*}}{\sum\limits_{i = 1}^{n}\sum\limits_{\ell = 1, \ell \ne i}^{n} \sum\limits_{\stackrel{r = 1}{r \ne \ell, r \ne i}}^{n} \widehat{D}_{1i,*}\widehat{D}_{2\ell,*}\widehat{D}_{3r,*}}.
\label{bc:vus-old}
\end{equation} 
Here, the symbol $*$ stands for type of estimator, i.e., FI, MSI, IPW and PDR. 

Despite its usefulness, the above reviewed approach presents some limitations. Firstly, it is scarcely flexible, since identifiability is proven only for the specific parametric assumptions about the verification and  the disease processes. Secondly, because of NI missigness, it does not allow a statistical check of appropriateness of such assumptions (see also \citealp{riddles2016propensity}, \citealp{morikawa2017semiparametric} and \citealp{yu2018estimation}). Finally, as pointed out by the authors themselves, MLE of ignorable/nonignorable parameters $\lambda_1$ and $\lambda_2$ are typically highly biased, in cases of moderate or even high sample sizes.

In the present paper, we try to overcome such limitations. More precisely, here we consider  the mean score function as a tool to estimate the parameters in the verification model, and avoid identifiability problems by resorting to the introduction of instrumental variables \citep{wang2014instrumental, riddles2016propensity}. As we show in what follows, in order to solve the mean score equation derived by the chosen verification model, we preliminarily need to estimate the parameters of a model for the disease process, but its specification is required only for verified subjects. Thus, the new proposed approach essentially avoids the specification of a parametric regression model for the disease status of the whole sample, and presents the advantage that  it allows a  statistical evaluation, based  on the available sample data, of the adequacy of the required conditional (to $V=1$) disease model. Furthermore, as confirmed by our simulation experiments, the reduction in complexity of the adopted model translates into a greater stability of estimates of the parameters.

Based on the new approach, we provide four new VUS estimators (FI, MSI, IPW and PDR), which accommodate for nonignorable missingness of the disease status. We prove consistency and asymptotic normality of the proposed estimators, and also provide estimators of their variances.

The paper is organised as follows. In Section \ref{sec:2}, we present the proposed approach, and give the new bias--corrected VUS estimators, for which, in Section \ref{sec:3},  we discuss the asymptotic behaviour. In Section \ref{sec:simu}, we present  the results of the simulation study, and in Section \ref{sec:app} we give an illustrative application. Section \ref{sec:concl} contains some final remarks. Some technical details can be found in Appendix.

\section{The proposal}
\label{sec:2}
It is worth noting that MLE of $(\boldsymbol{\phi}^\top, \boldsymbol{\eta}^\top)^\top$,  on which estimators (\ref{bc:vus-old}) ultimately rely, could be also obtained as solution of the so-called mean score equation \citep{louis1982finding,kim2013statistical,riddles2016propensity}. Let $S(\boldsymbol{\phi})$ and $S(\boldsymbol{\eta})$ denote the score functions derived from the verification model (\ref{veri:model:1}) and from the disease model (\ref{dise:model}), respectively. For the joint model, the mean score function is $\bar{S}(\boldsymbol{\phi}, \boldsymbol{\eta}) =(\bar{S}(\boldsymbol{\phi}),\bar{S}(\boldsymbol{\eta}))^\top$ \citep{kim2013statistical}, with
\begin{eqnarray}
\bar{S}(\boldsymbol{\phi}) &=& 
\sum_{i = 1}^{n} \bigg[V_i h(\boldsymbol{\phi}; V_i, T_i, \mathbf{A}_{i}, D_{1i}, D_{2i}) + (1 - V_i)\E_0\left\{h(\boldsymbol{\phi}; V_i, T_i, \mathbf{A}_{i}, D_1, D_2)|T_i, \mathbf{A}_{i}\right\} \bigg]
\nonumber \\
\bar{S}(\boldsymbol{\eta}) &=& 
\sum_{i = 1}^{n} \bigg[V_i u(\boldsymbol{\eta}; T_i, \mathbf{A}_{i}, D_{1i}, D_{2i}) + (1 - V_i)\E_0\left\{u(\boldsymbol{\eta}; T_i, \mathbf{A}_{i}, D_1, D_2)|T_i, \mathbf{A}_{i}\right\} \bigg]
\label{mscore_rho},
\end{eqnarray}
where $h(\boldsymbol{\phi}; \cdot)$ and $u(\boldsymbol{\eta}; \cdot)$ denote the standard contributions to  $S(\boldsymbol{\phi})$ and $S(\boldsymbol{\eta})$, respectively. Terms $\E_0(\cdot|T_i, \mathbf{A}_{i}) = \E(\cdot|T_i, \mathbf{A}_{i}, V_i = 0)$, represent the contribution to the scores, arising from unverified subjects.

Hereafter, we consider an alternative estimation procedure which solves the mean score equation derived from the verification model only, and uses instrumental variables. To this aim,
we assume that the vector of covariates $\mathbf{A}$ can be decomposed as $\mathbf{A} = \left(\mathbf{A}^\top_1, \mathbf{A}^\top_2\right)^\top$ such that the verification status $V$ does not depend on $\mathbf{A}_2$ given $T$, $\mathbf{A}_1$, $D_1$ and $D_2$.  Variable $\mathbf{A}_2$ is known as instrumental variable and helps to make the verification model identifiable (see \citealp{wang2014instrumental} and \citealp{riddles2016propensity}) without further assumptions on the disease model. By using the instrumental variable $\mathbf{A}_2$, we can reduce the verification model (\ref{veri:model:1}) to
\begin{eqnarray}
\pi &=& \Pr(V = 1|T, \mathbf{A}, D_1, D_2) = \Pr(V = 1|T, \mathbf{A}_1, D_1, D_2), \nonumber\\
&=& \frac{\exp\left(\beta_0 + \beta_1 T + \boldsymbol{\beta}_{2,1} ^\top \mathbf{A}_1 + \lambda_1 D_1 + \lambda_2 D_2 \right)}{1 + \exp\left(\beta_0 + \beta_1 T + \boldsymbol{\beta}_{2,1} ^\top \mathbf{A}_1 + \lambda_1 D_1 + \lambda_2 D_2 \right)} = \pi(T, \mathbf{A}_1, D_1, D_2; \boldsymbol{\gamma})
\label{veri:model:2}
\end{eqnarray}
where $\boldsymbol{\gamma} = \left(\beta_0, \beta_1, \boldsymbol{\beta}^\top_{2,1}, \lambda_1, \lambda_2\right)^\top$. Then, MLE estimate of $\boldsymbol{\gamma}$ can be obtained by solving the mean score equation $\bar{S}(\boldsymbol{\gamma}) = \boldsymbol{0}$ where
\begin{eqnarray}
\bar{S}(\boldsymbol{\gamma}) &=&  
\sum_{i = 1}^{n} \bigg[V_i h(\boldsymbol{\gamma}; V_i, T_i, \mathbf{A}_{1i}, D_{1i}, D_{2i}) + (1 - V_i)\E_0\left\{h(\boldsymbol{\gamma}; V_i, T_i, \mathbf{A}_{1i}, D_1, D_2)|T_i, \mathbf{A}_{i}\right\} \bigg]
\label{score:1}
\end{eqnarray}
with
\[
h(\boldsymbol{\gamma}; V_i, T_i, \mathbf{A}_{1i}, D_{1i}, D_{2i}) =
Z_i \left\{V_i - \pi(T_i, \mathbf{A}_{1i}, D_{1i}, D_{2i};\boldsymbol{\gamma}) \right\}
\]
for $Z_i= (1, T_i, \mathbf{A}_{1i}^\top, D_{1i}, D_{2i})^\top$ and
\begin{eqnarray}
\lefteqn{\E_0\left\{h(\boldsymbol{\gamma}; V_i, T_i, \mathbf{A}_{1i}, D_{1i}, D_{2i})|T_i, \mathbf{A}_i \right\}} \nonumber \\
&=& \E\left\{h(\boldsymbol{\gamma}; 0, T_i, \mathbf{A}_{1i}, D_{1i}, D_{2i})|T_i, \mathbf{A}_i, V_i = 0 \right\} \nonumber \\
&=& \Pr(D_{1i} = 1| T_i, \mathbf{A}_i, V_i = 0) h(\boldsymbol{\gamma}; 0, T_i, \mathbf{A}_{1i}, 1, 0) + \Pr(D_{2i} = 1| T_i, \mathbf{A}_i, V_i = 0)h(\boldsymbol{\gamma}; 0, T_i, \mathbf{A}_{1i}, 0, 1) \nonumber \\
&& + \: \Pr(D_{3i} = 1| T_i, \mathbf{A}_i, V_i = 0) h(\boldsymbol{\gamma}; 0, T_i, \mathbf{A}_{1i}, 0, 0) \label{expr:E0}.
\end{eqnarray}
Let $\rho_{k(v)}$ denote $\Pr(D_k = 1|V = v, T, \mathbf{A})$ for $k = 1, 2, 3$ and $v = 0, 1$. By an application of Bayes' rule (see  Appendix \ref{app:A}), it can be shown that
\begin{eqnarray}
\rho_{1(0)}(\lambda_1, \lambda_2) &=& \frac{\rho_{1(1)} e^{\lambda_2}}{\rho_{1(1)} e^{\lambda_2} + \rho_{2(1)} e^{\lambda_1} + \rho_{3(1)} e^{\lambda_1 + \lambda_2}}, \nonumber \\ 
\rho_{2(0)}(\lambda_1, \lambda_2) &=& \frac{\rho_{2(1)} e^{\lambda_1}}{\rho_{1(1)} e^{\lambda_2} + \rho_{2(1)} e^{\lambda_1} + \rho_{3(1)} e^{\lambda_1 + \lambda_2}}, \label{expr:rho_0} \\
\rho_{3(0)}(\lambda_1, \lambda_2) &=& \frac{\rho_{3(1)} e^{\lambda_1 + \lambda_2}}{\rho_{1(1)} e^{\lambda_2} + \rho_{2(1)} e^{\lambda_1} + \rho_{3(1)} e^{\lambda_1 + \lambda_2}}. \nonumber
\end{eqnarray}
Hence, the mean score function (\ref{score:1}) depends on the disease probabilities for verified units only. If (just to keep the parallel with the proposal in \citet{toduc2019vus}) we use a multinomial logistic regression model for the conditional disease probabilities $\rho_{k(1)}=\Pr(D_k = 1| V = 1, T, \mathbf{A})$, i.e.,
\begin{eqnarray}
\rho_{k(1)} &\equiv& \frac{\exp\left(\eta^{(k)}_0 + \eta^{(k)}_1 T + \mathbf{A}^\top \boldsymbol{\eta}^{(k)}_2 \right)}{1 + \exp\left(\eta^{(1)}_0 + \eta^{(1)}_1 T + \mathbf{A}^\top \boldsymbol{\eta}^{(1)}_2 \right) + \exp\left(\eta^{(2)}_0 + \eta^{(2)}_1 T + \mathbf{A}^\top\boldsymbol{\eta}^{(2)}_2 \right)},  \nonumber\\
&=& \rho_{k(1)}(\boldsymbol{\eta})
\label{dise:model:veri}
\end{eqnarray}
$k = 1, 2$, the functions $\rho_{k(0)}(\lambda_1, \lambda_2)$ depend also on $\boldsymbol{\eta}$,
and the expressions in (\ref{expr:rho_0}) become 
\begin{eqnarray}
\rho_{1(0)}(\boldsymbol{\eta}, \lambda_1, \lambda_2) &=& \frac{\rho_{1(1)}(\boldsymbol{\eta}) e^{\lambda_2}}{\rho_{1(1)}(\boldsymbol{\eta}) e^{\lambda_2} + \rho_{2(1)}(\boldsymbol{\eta}) e^{\lambda_1} + \rho_{3(1)}(\boldsymbol{\eta}) e^{\lambda_1 + \lambda_2}}, \nonumber \\ 
\rho_{2(0)}(\boldsymbol{\eta}, \lambda_1, \lambda_2) &=& \frac{\rho_{2(1)}(\boldsymbol{\eta}) e^{\lambda_1}}{\rho_{1(1)}(\boldsymbol{\eta}) e^{\lambda_2} + \rho_{2(1)}(\boldsymbol{\eta}) e^{\lambda_1} + \rho_{3(1)}(\boldsymbol{\eta}) e^{\lambda_1 + \lambda_2}}, \label{expr:rho_0:eta} \\
\rho_{3(0)}(\boldsymbol{\eta}, \lambda_1, \lambda_2) &=& \frac{\rho_{3(1)}(\boldsymbol{\eta}) e^{\lambda_1 + \lambda_2}}{\rho_{1(1)}(\boldsymbol{\eta}) e^{\lambda_2} + \rho_{2(1)}(\boldsymbol{\eta}) e^{\lambda_1} + \rho_{3(1)}(\boldsymbol{\eta}) e^{\lambda_1 + \lambda_2}}. \nonumber
\end{eqnarray}
By using (\ref{expr:E0}) and (\ref{expr:rho_0:eta}), the mean score function (\ref{score:1}) can be rewritten as
\begin{eqnarray}
\bar{S}(\boldsymbol{\gamma}; \boldsymbol{\eta}) &=& \sum_{i = 1}^{n} s_i(\boldsymbol{\gamma}; \boldsymbol{\eta}) \nonumber \\
&=& \sum_{i = 1}^{n} \bigg[V_i h(\boldsymbol{\gamma}; 1, T_i, \mathbf{A}_{1i}, D_{1i}, D_{2i}) + (1 - V_i)\big\{\rho_{1(0)i}(\boldsymbol{\eta}, \lambda_1, \lambda_2) h(\boldsymbol{\gamma}; 0, T_i, \mathbf{A}_{1i}, 1, 0) \nonumber \\
&& + \: \rho_{2(0)i}(\boldsymbol{\eta}, \lambda_1, \lambda_2) h(\boldsymbol{\gamma}; 0, T_i, \mathbf{A}_{1i}, 0, 1) + \rho_{3(0)i}(\boldsymbol{\eta}, \lambda_1, \lambda_2) h(\boldsymbol{\gamma}; 0, T_i, \mathbf{A}_{1i}, 0, 0)\big\} \bigg].
\label{score:2:gen}
\end{eqnarray}
Then, once an estimate $\widehat{\boldsymbol{\eta}}$ is available, we can get an empirical form of the mean score function (\ref{score:1}), say $\bar{S}(\boldsymbol{\gamma}; \widehat{\boldsymbol{\eta}})$. The empirical mean score equation $\bar{S}(\boldsymbol{\gamma}; \widehat{\boldsymbol{\eta}}) = \boldsymbol{0}$ can be solved to obtain an estimate $\widehat{\boldsymbol{\gamma}}$ of $\boldsymbol{\gamma}$. 

Observe that $\widehat{\boldsymbol{\gamma}}$ would be the MLE estimate only if $\widehat{\boldsymbol{\eta}}$ was solution of equation (\ref{mscore_rho}). It should be noted, however, that in our simplified approach, $\widehat{\boldsymbol{\eta}}$ is the solution of a  score equation that depends on the parametric model specification for the conditional (to $V=1$) disease process. The verification mechanism may be modelled by means of a generalised linear regression model for binary data with a fixed link function (e.g., logistic, probit, log-log or complementary log-log). Whichever choice is taken, identifiability of the model holds thanks to the use of instrumental variables. In what follows, for the purposes of presentation, we  adopt a standard logistic model for the verification process, and a multinomial logistic model for the conditional (to verification) disease process, a solution that also allow to keep the parallel with the approach in \citet{toduc2019vus}. 

Let $U(\boldsymbol{\eta}) = \sum\limits_{i = 1}^{n} V_iu_i(\boldsymbol{\eta})$, with $u_i=u(\boldsymbol{\eta}; T_i, \mathbf{A}_{i}, D_{1i}, D_{2i})$, denote the estimating function for $\boldsymbol{\eta}$ based on the verified units in the sample. Since our empirical mean score equation $\bar{S}(\boldsymbol{\gamma}; \widehat{\boldsymbol{\eta}}) = \boldsymbol{0}$ is a specific case of a more general one discussed by \citet{riddles2016propensity}, it follows that, under certain regularity conditions given in Appendix \ref{app:B}, $\widehat{\boldsymbol{\gamma}}$ and $\widehat{\boldsymbol{\eta}}$ are consistent and asymptotically normal estimators (see Theorem 1 in \citealp{riddles2016propensity}).

In order to obtain four bias-corrected VUS estimators (FI, MSI, IPW and PDR), we follow the arguments in \citet{toduc2019vus}, starting by the fact that
\[
\E\left[V_i \rho_{k(1)i}(\boldsymbol{\eta}) + (1 - V_i)\rho_{k(0)i}(\boldsymbol{\eta}, \lambda_1, \lambda_2) | T_i, \mathbf{A}_i \right] = \Pr(D_k = 1|T_i, \mathbf{A}_i)
\]
and considering the quantities 
\[
D_{ki,\mathrm{FI}}(\boldsymbol{\eta}, \lambda_1, \lambda_2) = V_i \rho_{k(1)i}(\boldsymbol{\eta}) + (1- V_i)\rho_{k(0)i}(\boldsymbol{\eta}, \lambda_1, \lambda_2),
\]
\[
D_{ki,\mathrm{MSI}}(\boldsymbol{\eta}, \lambda_1, \lambda_2) = V_i D_{ki} + (1 - V_i)\rho_{k(0)i}(\boldsymbol{\eta}, \lambda_1, \lambda_2),
\]
\[
D_{ki, \mathrm{IPW}}(\boldsymbol{\gamma}) = \frac{V_i D_{ki}}{\pi_i(\boldsymbol{\gamma})}
\ \ \  {\rm and} \ \ \
D_{ki,\mathrm{PDR}}(\boldsymbol{\eta}, \boldsymbol{\gamma}) = \frac{V_i D_{ki}}{\pi_i(\boldsymbol{\gamma})} - \rho_{k(0)i}(\boldsymbol{\eta}, \lambda_1, \lambda_2)\frac{V_i - \pi_i(\boldsymbol{\gamma})}{\pi_i(\boldsymbol{\gamma})}.
\] 
in which we stressed the fact that the probabilities of selection for verification $\pi_i$ depend on 
$\boldsymbol{\gamma}$.
Plugging in the above expressions the estimates $\widehat{\boldsymbol{\eta}}$ and $\widehat{\boldsymbol{\gamma}}$, yields  the estimates $\widehat{D}_{ki,\mathrm{FI}}$, $\widehat{D}_{ki,\mathrm{MSI}}$, $\widehat{D}_{ki, \mathrm{IPW}}$ and $\widehat{D}_{ki, \mathrm{PDR}}$.
Then, the bias-corrected VUS estimators are obtained by replacing each disease status in (\ref{nonp:vus})  with the new estimates $\widehat{D}_{ki,*}$, where the symbol $*$ indicates, again, the kind of estimator, i.e., FI, MSI, IPW and PDR.

Note that, the parameters $\boldsymbol{\gamma}$ of the verification model (\ref{veri:model:2}) and $\boldsymbol{\eta}$ of the conditional (to $V=1$) disease model (\ref{dise:model:veri}) are estimated separately (see the estimation process in Appendix \ref{app:C}). However, the accuracy of  the estimates $\widehat{\pi}_i$ in the verification model (\ref{veri:model:2}) may suffer from possible fitting's problems in the disease model (\ref{dise:model:veri}), because the empirical mean score equation uses $\widehat{\boldsymbol{\eta}}$. On the other hand, the estimates $\widehat{\rho}_{k(0)i}$  also may be affected by possible poor accuracy in the estimation of ${\lambda}_1$, ${\lambda}_2$ or ${\boldsymbol{\eta}}$. Ultimately, the fact that the proposed method is a parametric method has as a consequence that it could produce VUS estimates even strongly influenced by incorrect specifications of the involved models. 

\section{Asymptotic behaviour}
\label{sec:3}
As for the estimators discussed in \citet{toduc2019vus},  the new proposed VUS estimators can be found as solutions of appropriate estimating equations, solved along with the score equations $U(\boldsymbol{\eta}) = \boldsymbol{0}$ and the empirical mean score equation $\bar{S}(\boldsymbol{\gamma}; \widehat{\boldsymbol{\eta}}) = \boldsymbol{0}$. The estimating functions for FI, MSI, IPW and PDR estimators  have generic term (corresponding to a generic triplet of sample units), respectively,  
\begin{eqnarray}
G_{i\ell r, \text{FI}}(\mu, \boldsymbol{\eta},\boldsymbol{\gamma})
&=& D_{1i,\text{FI}}(\boldsymbol{\eta}, \lambda_1, \lambda_2) D_{2\ell, \text{FI}}(\boldsymbol{\eta}, \lambda_1, \lambda_2) D_{3r, \text{FI}}(\boldsymbol{\eta}, \lambda_1, \lambda_2) \left(I_{i\ell r} - \mu\right), \nonumber \\
G_{i\ell r, \text{MSI}}(\mu, \boldsymbol{\eta}, \boldsymbol{\gamma})
&=& D_{1i,\text{MSI}}(\boldsymbol{\eta}, \lambda_1, \lambda_2) D_{2\ell, \text{MSI}}(\boldsymbol{\eta}, \lambda_1, \lambda_2) D_{3r, \text{MSI}}(\boldsymbol{\eta}, \lambda_1, \lambda_2) \left(I_{i\ell r} - \mu\right), \nonumber \\
G_{i\ell r, \text{IPW}}(\mu, \boldsymbol{\eta}, \boldsymbol{\gamma})
&=& D_{1i, \mathrm{IPW}}(\boldsymbol{\gamma}) D_{2\ell, \mathrm{IPW}}(\boldsymbol{\gamma}) D_{3r, \mathrm{IPW}}(\boldsymbol{\gamma}) \left(I_{i\ell r} - \mu\right), \nonumber \\
G_{i\ell r,\text{PDR}}(\mu, \boldsymbol{\eta}, \boldsymbol{\gamma})
&=& D_{1i, \text{PDR}}(\boldsymbol{\eta}, \boldsymbol{\gamma}) D_{2\ell, \text{PDR}}(\boldsymbol{\eta}, \boldsymbol{\gamma}) D_{3r, \text{PDR}}(\boldsymbol{\eta}, \boldsymbol{\gamma}) \left(I_{i\ell r} - \mu\right), \nonumber
\end{eqnarray}
for $i \ne \ell$, $\ell \ne r$ and $r \ne i$.
In the following, we will use the general notation $G_{i\ell r,*}(\mu, \boldsymbol{\eta}, \boldsymbol{\gamma})$, where  $*$ stands for FI, MSI, IPW and PDR, and denote by 
$\widetilde{\mu}_*$ the corresponding new VUS estimator. Moreover, let $\boldsymbol{\eta}_0$, $\boldsymbol{\gamma}_0$ and $\mu_0$ be the true parameter values.  We assume that
\begin{enumerate} [(C1)]
	\item the U--process, $\sqrt{n}\left\{G_{*}(\mu, \boldsymbol{\eta}, \boldsymbol{\gamma}) - e_*(\mu, \boldsymbol{\eta}, \boldsymbol{\gamma})\right\}$, is stochastically equicontinuous, where
	\begin{eqnarray}
	G_{*}(\mu, \boldsymbol{\eta}, \boldsymbol{\gamma})
	&=& \frac{1}{6n(n-1)(n-2)} \sum\limits_{i = 1}^{n}\sum\limits_{\ell = 1, \ell \ne i}^{n}\sum\limits_{\stackrel{r = 1}{r \ne \ell, r \ne i}}^{n} \bigg\{
	G_{i\ell r,*}(\mu, \boldsymbol{\eta}, \boldsymbol{\gamma}) + G_{ir \ell,*}(\mu, \boldsymbol{\eta}, \boldsymbol{\gamma}) \nonumber\\
	&& + \: G_{\ell ir,*}(\mu, \boldsymbol{\eta}, \boldsymbol{\gamma}) + G_{\ell r i,*}(\mu, \boldsymbol{\eta}, \boldsymbol{\gamma}) + G_{r i\ell ,*}(\mu, \boldsymbol{\eta}, \boldsymbol{\gamma}) + G_{r \ell i,*}(\mu, \boldsymbol{\eta}, \boldsymbol{\gamma}) \bigg\} \nonumber 
	\end{eqnarray}
	and
	\begin{eqnarray}
	e_*(\mu, \boldsymbol{\eta}, \boldsymbol{\gamma}) &=& \frac{1}{6} \E \bigg\{
	G_{i\ell  r,*}(\mu, \boldsymbol{\eta}, \boldsymbol{\gamma}) + G_{ir \ell ,*}(\mu, \boldsymbol{\eta}, \boldsymbol{\gamma}) + G_{\ell ir,*}(\mu, \boldsymbol{\eta}, \boldsymbol{\gamma}) + G_{\ell r i,*}(\mu, \boldsymbol{\eta}, \boldsymbol{\gamma}) \nonumber\\
	&& + \: G_{r i\ell ,*}(\mu, \boldsymbol{\eta}, \boldsymbol{\gamma}) + G_{r \ell i,*}(\mu, \boldsymbol{\eta}, \boldsymbol{\gamma}) \bigg\} \nonumber;
	\end{eqnarray}
	\item $e_*(\mu, \boldsymbol{\eta}, \boldsymbol{\gamma})$ is differentiable in $(\mu, \boldsymbol{\eta}, \boldsymbol{\gamma})$, and
	$\dfrac{\partial e_*(\mu, \boldsymbol{\eta}_0, \boldsymbol{\gamma}_0)}{\partial \mu}\Bigg|_{\mu = \mu_0} \ne 0$;
	\item $G_{*}(\mu,\boldsymbol{\eta}, \boldsymbol{\gamma})$ and $\frac{\partial G_{*}(\mu, \boldsymbol{\eta}, \boldsymbol{\gamma})}{\partial (\boldsymbol{\eta}, \boldsymbol{\gamma})^\top}$ converge uniformly (in probability) to $e_*(\mu, \boldsymbol{\eta}, \boldsymbol{\gamma})$ and $\frac{\partial e_*(\mu, \boldsymbol{\eta}, \boldsymbol{\gamma})}{\partial (\boldsymbol{\eta}, \boldsymbol{\gamma})^\top}$, respectively.
\end{enumerate}

Let $\mathbf{O}_i = (T_i, \mathbf{A}^\top_{i}, D_{1i}, D_{2i}, D_{3i},V_i)^\top$, $\mathcal{I}_{\boldsymbol{\eta}}(\boldsymbol{\eta}) = \E \left\{- \frac{\partial u_i(\boldsymbol{\eta})}{\partial \boldsymbol{\eta}^\top} \right\}$, $\mathcal{I}_{s,\boldsymbol{\gamma}}(\boldsymbol{\eta}, \boldsymbol{\gamma}) = \E \left\{- \frac{\partial s_i(\boldsymbol{\gamma}; \boldsymbol{\eta})}{\partial \boldsymbol{\gamma}^\top} \right\}$ and $\mathcal{I}_{s,\boldsymbol{\eta}}(\boldsymbol{\eta}, \boldsymbol{\gamma}) = \E \left\{- \frac{\partial s_i(\boldsymbol{\gamma}; \boldsymbol{\eta})}{\partial \boldsymbol{\eta}^\top} \right\}$. 

\begin{theorem}
	Suppose that conditions (C1)--(C3), and (D1)--(D9) in Appendix B, hold. Then, under the verification model (\ref{veri:model:2}) and the (conditional) disease model (\ref{dise:model:veri}), $\widetilde{\mu}_*$ is consistent and asymptotically normal, i.e.,
	\begin{equation}
	\sqrt{n}\left(\widetilde{\mu}_{*} - \mu_0 \right) \stackrel{d}{\longrightarrow} \mathcal{N}(0, \sigma_*^2), \nonumber
	\end{equation}
	where \ $\sigma_*^2 = \dfrac{\Var \left\{\Lambda_{i,*} + Q_{i,*}\right\}}{\left[\Pr(D_1 = 1) \Pr(D_2 = 1) \Pr(D_3 = 1)\right]^2}$ \ and
	\begin{eqnarray}
	\Lambda_{i,*} &=& \frac{1}{2} \E \bigg\{ G_{i\ell r,*}(\mu_0,\boldsymbol{\eta}_0, \boldsymbol{\gamma}_0) + G_{ir \ell ,*}(\mu_0, \boldsymbol{\eta}_0, \boldsymbol{\gamma}_0) + G_{\ell ir,*}(\mu_0, \boldsymbol{\eta}_0, \boldsymbol{\gamma}_0) \nonumber\\
	&& + \: G_{\ell r i,*}(\mu_0, \boldsymbol{\eta}_0, \boldsymbol{\gamma}_0) + G_{r i\ell ,*}(\mu_0, \boldsymbol{\eta}_0, \boldsymbol{\gamma}_0) + G_{r \ell i,*}(\mu_0, \boldsymbol{\eta}_0, \boldsymbol{\gamma}_0) \big|\mathbf{O}_i \bigg\} \nonumber, \\
	Q_{i,*} &=& \frac{\partial e_*(\mu_0, \boldsymbol{\eta}, \boldsymbol{\gamma})}{\partial (\boldsymbol{\eta}, \boldsymbol{\gamma})^\top}\bigg|_{(\boldsymbol{\eta}, \boldsymbol{\gamma})^\top = (\boldsymbol{\eta}_0, \boldsymbol{\gamma}_0)^\top} \begin{pmatrix}
	\mathcal{I}_{\boldsymbol{\eta}}(\boldsymbol{\eta}_0) & \boldsymbol{0} \\
	\mathcal{I}_{s,\boldsymbol{\eta}}(\boldsymbol{\eta}_0, \boldsymbol{\gamma}_0) & \mathcal{I}_{s,\boldsymbol{\gamma}}(\boldsymbol{\eta}_0, \boldsymbol{\gamma}_0)
	\end{pmatrix}^{-1}  \begin{pmatrix}
	V_i u_i(\boldsymbol{\eta}_0) \\ s_i(\boldsymbol{\gamma}_0; \boldsymbol{\eta}_0)
	\end{pmatrix}. \nonumber
	\end{eqnarray}
\end{theorem}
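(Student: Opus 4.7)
My plan is to treat $\widetilde{\mu}_*$ as one component of a stacked M--estimator for the triple $(\mu, \boldsymbol{\eta}, \boldsymbol{\gamma})$, combining the U--type equation $G_*(\mu, \widehat{\boldsymbol{\eta}}, \widehat{\boldsymbol{\gamma}}) = 0$ with the verified--subjects score equation $U(\boldsymbol{\eta}) = \boldsymbol{0}$ and the empirical mean score equation $\bar S(\boldsymbol{\gamma}; \widehat{\boldsymbol{\eta}}) = \boldsymbol{0}$, and then to linearise the whole system around the truth via the standard influence--function machinery, with the extra care required because $G_*$ is a symmetric third--order U--process rather than a plain sample mean. Consistency comes first: by Theorem~1 of \citet{riddles2016propensity} together with (D1)--(D9), $\widehat{\boldsymbol{\eta}} \stackrel{P}{\longrightarrow} \boldsymbol{\eta}_0$ and $\widehat{\boldsymbol{\gamma}} \stackrel{P}{\longrightarrow} \boldsymbol{\gamma}_0$; the uniform convergence in (C3) then turns $G_*(\widetilde{\mu}_*, \widehat{\boldsymbol{\eta}}, \widehat{\boldsymbol{\gamma}}) = 0$ into $e_*(\widetilde{\mu}_*, \boldsymbol{\eta}_0, \boldsymbol{\gamma}_0) = o_P(1)$, and the non--degeneracy assumed in (C2) identifies $\mu_0$ as the unique isolated zero, giving $\widetilde{\mu}_* \stackrel{P}{\longrightarrow} \mu_0$.

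For the distributional limit I Taylor--expand the defining equation around $(\mu_0, \boldsymbol{\eta}_0, \boldsymbol{\gamma}_0)$,
\begin{equation*}
0 = G_*(\mu_0, \boldsymbol{\eta}_0, \boldsymbol{\gamma}_0) + \frac{\partial e_*(\mu_0, \boldsymbol{\eta}_0, \boldsymbol{\gamma}_0)}{\partial \mu}(\widetilde{\mu}_* - \mu_0) + \frac{\partial e_*(\mu_0, \boldsymbol{\eta}_0, \boldsymbol{\gamma}_0)}{\partial (\boldsymbol{\eta}^\top, \boldsymbol{\gamma}^\top)} \begin{pmatrix} \widehat{\boldsymbol{\eta}} - \boldsymbol{\eta}_0 \\ \widehat{\boldsymbol{\gamma}} - \boldsymbol{\gamma}_0 \end{pmatrix} + o_P(n^{-1/2}),
\end{equation*}
where the stochastic equicontinuity in (C1) together with the uniform convergence in (C3) is what allows the derivatives of the random U--process $G_*$ to be replaced by those of the deterministic limit $e_*$ with only $o_P(n^{-1/2})$ error. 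The two genuinely stochastic ingredients are then handled separately. Because each of the six permuted kernels has zero expectation at the truth (as $\mu_0 = \E[\I_{i\ell r} D_{1i} D_{2\ell} D_{3r}]/(\theta_1 \theta_2 \theta_3)$ and each imputed $D_{ki,*}$ is unbiased for $D_k$), $\sqrt n\, G_*(\mu_0, \boldsymbol{\eta}_0, \boldsymbol{\gamma}_0)$ is a centred symmetric third--order U--statistic, so Hoeffding's projection delivers $\sqrt n\, G_*(\mu_0, \boldsymbol{\eta}_0, \boldsymbol{\gamma}_0) = n^{-1/2}\sum_{i=1}^n \Lambda_{i,*} + o_P(1)$; the factor $\tfrac{1}{2}$ in the definition of $\Lambda_{i,*}$ is exactly the combinatorial $3/6$ coming from projecting a third--order symmetric U--statistic onto a single observation $\mathbf{O}_i$. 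For the nuisance block I stack $U$ and $\bar S$ and invoke the sandwich expansion of \citet{riddles2016propensity} to obtain
\begin{equation*}
\sqrt n \begin{pmatrix} \widehat{\boldsymbol{\eta}} - \boldsymbol{\eta}_0 \\ \widehat{\boldsymbol{\gamma}} - \boldsymbol{\gamma}_0 \end{pmatrix} = \begin{pmatrix} \mathcal I_{\boldsymbol{\eta}}(\boldsymbol{\eta}_0) & \boldsymbol 0 \\ \mathcal I_{s,\boldsymbol{\eta}}(\boldsymbol{\eta}_0, \boldsymbol{\gamma}_0) & \mathcal I_{s,\boldsymbol{\gamma}}(\boldsymbol{\eta}_0, \boldsymbol{\gamma}_0) \end{pmatrix}^{-1} \frac{1}{\sqrt n} \sum_{i=1}^n \begin{pmatrix} V_i u_i(\boldsymbol{\eta}_0) \\ s_i(\boldsymbol{\gamma}_0; \boldsymbol{\eta}_0) \end{pmatrix} + o_P(1),
\end{equation*}
so that pre--multiplication by $\partial e_*/\partial (\boldsymbol{\eta}^\top, \boldsymbol{\gamma}^\top)$ contributes precisely $Q_{i,*}$ to the total influence function.

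The last ingredient is the derivative $\partial e_*(\mu_0, \boldsymbol{\eta}_0, \boldsymbol{\gamma}_0)/\partial \mu$: since the kernel is $D_{1i,*} D_{2\ell,*} D_{3r,*}(\I_{i\ell r} - \mu)$, and since independence of distinct subjects together with the unbiasedness $\E(D_{ki,*}) = \Pr(D_k=1)$ holds for every $* \in \{\mathrm{FI},\mathrm{MSI},\mathrm{IPW},\mathrm{PDR}\}$ (this is the identity highlighted just before the definition of the four imputed statuses, and is verified directly in the IPW/PDR cases using $\E(V_i/\pi_i \mid T_i,\mathbf{A}_i, D_{1i}, D_{2i})=1$), the derivative equals $-\Pr(D_1=1)\Pr(D_2=1)\Pr(D_3=1)$. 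Dividing the Taylor expansion by this constant and applying the Lindeberg--L\'evy CLT to the i.i.d.\ sum $n^{-1/2}\sum_i (\Lambda_{i,*} + Q_{i,*})$ yields the claimed $\mathcal{N}(0, \sigma_*^2)$ limit. The main technical obstacle is the coupling between the U--process structure of $G_*$ and the plugged--in nuisance estimators $(\widehat{\boldsymbol{\eta}}, \widehat{\boldsymbol{\gamma}})$: one has to show that the Hoeffding projection remains valid after replacing the true parameters with $n^{1/2}$--consistent estimators, that the quadratic Taylor remainders are uniformly $o_P(n^{-1/2})$ over a shrinking neighbourhood of the truth, and that the cross--effect between these two sources of randomness does not introduce extra contributions to the influence function --- and this is exactly the content of (C1), (C3), combined with the regularity conditions (D1)--(D9) in Appendix~B.
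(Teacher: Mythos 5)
Your proposal is correct and follows essentially the same route as the paper's own (sketch) proof: consistency via uniform convergence of $G_*$ to $e_*$ and the nondegeneracy of $\partial e_*/\partial\mu$, then a Taylor expansion of the estimating equation combined with the Hoeffding projection of the centred third-order U-statistic (your observation that the $\tfrac12$ in $\Lambda_{i,*}$ is the combinatorial $3/6$ is exactly right) and the Riddles-et-al.\ sandwich expansion (\ref{asy:gamma:eta}) for the nuisance block, before applying the CLT to the resulting i.i.d.\ influence-function sum. The additional details you supply (zero-mean kernels via conditional unbiasedness of the $D_{ki,*}$, and the verification $\E(V_i/\pi_i \mid T_i,\mathbf{A}_i,D_{1i},D_{2i})=1$ for the IPW/PDR cases) are consistent with, and slightly more explicit than, what the paper records.
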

\begin{proof}(sketch).
	Under the verification model (\ref{veri:model:2}) and the disease model (\ref{dise:model:veri}), we can show that quantities $G_{i\ell r,*}(\mu_0, \boldsymbol{\eta}_0, \boldsymbol{\gamma}_0)$ have 0 expectation. Then, the consistency of $\widetilde{\mu}_{*}$ can be proven by arguments similar to those used  in Theorem 1 of \citet{toduc2019vus}. As for the asymptotic normality, we follow the steps in the proof of Theorem 2 in \citet{toduc2019vus}. In particular, using also equation (\ref{asy:gamma:eta}), since $e_*(\mu_0, \boldsymbol{\eta}_0, \boldsymbol{\gamma}_0) = 0$, we can write
	\begin{eqnarray}
	0 &=& \frac{\partial e_*(\mu, \boldsymbol{\eta}_0, \boldsymbol{\gamma}_0)}{\partial \mu}\bigg|_{\mu = \mu_0}\sqrt{n}(\widetilde{\mu}_* - \mu_0) + \sqrt{n}G_{*}(\mu_0, \boldsymbol{\eta}_0, \boldsymbol{\gamma}_0)  \nonumber \\ 
	&& + \: \frac{\partial e_*(\mu_0, \boldsymbol{\eta}, \boldsymbol{\gamma})}{\partial (\boldsymbol{\eta}, \boldsymbol{\gamma})^\top}\bigg|_{(\boldsymbol{\eta}, \boldsymbol{\gamma})^\top = (\boldsymbol{\eta}_0, \boldsymbol{\gamma}_0)^\top} \sqrt{n}\begin{pmatrix}
	\widehat{\boldsymbol{\eta}} - \boldsymbol{\eta}_0 \\ 
	\widehat{\boldsymbol{\gamma}} - \boldsymbol{\gamma}_0
	\end{pmatrix} + o_p(1).
	\label{expan:Taylor}
	\end{eqnarray}
	An application of a standard result about the limit distribution of U-statistics \citep{van}, yields to
	\[
	\sqrt{n}G_{*}(\mu_0, \boldsymbol{\eta}_0, \boldsymbol{\gamma}_0) \stackrel{d}{\longrightarrow} \frac{1}{\sqrt{n}} \sum_{i = 1}^{n} \Lambda_{i,*} \ .
	\]
	This, together with the fact that
	\[
	\frac{\partial e_*(\mu, \boldsymbol{\eta}_0, \boldsymbol{\gamma}_0)}{\partial \mu}\bigg|_{\mu = \mu_0} = -\Pr(D_1 = 1) \Pr(D_2 = 1) \Pr(D_3 = 1),
	\]
	and equation (\ref{asy:gamma:eta}), show that (\ref{expan:Taylor}) is equivalent to
	\begin{eqnarray}
	\lefteqn{\Pr(D_1 = 1) \Pr(D_2 = 1) \Pr(D_3 = 1) \sqrt{n}(\widetilde{\mu}_* - \mu_0)} \nonumber \\
	&=& \frac{\partial e_*(\mu_0, \boldsymbol{\eta}, \boldsymbol{\gamma})}{\partial (\boldsymbol{\eta}, \boldsymbol{\gamma})^\top}\bigg|_{(\boldsymbol{\eta}, \boldsymbol{\gamma})^\top = (\boldsymbol{\eta}_0, \boldsymbol{\gamma}_0)^\top} \begin{pmatrix}
	\mathcal{I}_{\boldsymbol{\eta}}(\boldsymbol{\eta}_0) & \boldsymbol{0} \\
	\mathcal{I}_{s,\boldsymbol{\eta}}(\boldsymbol{\eta}_0, \boldsymbol{\gamma}_0) & \mathcal{I}_{s,\boldsymbol{\gamma}}(\boldsymbol{\eta}_0, \boldsymbol{\gamma}_0)
	\end{pmatrix}^{-1} \frac{1}{\sqrt{n}}\sum_{i = 1}^{n} \begin{pmatrix}
	V_iu_i(\boldsymbol{\eta}_0) \\ s_i(\boldsymbol{\gamma}_0; \boldsymbol{\eta}_0)
	\end{pmatrix} \nonumber\\
	&& + \: \frac{1}{\sqrt{n}} \sum_{i = 1}^{n} \Lambda_{i,*} + o_p(1). \nonumber
	\end{eqnarray}
	Then, we have that
	\[
	\sqrt{n}(\widetilde{\mu}_* - \mu_0) = \frac{1}{\sqrt{n}}\sum_{i = 1}^{n} \frac{\Lambda_{i,*} + Q_{i,*}}{\Pr(D_1 = 1) \Pr(D_2 = 1) \Pr(D_3 = 1)} + o_p(1),
	\]
	and the asymptotic normality of $\widetilde{\mu}_*$ follows by the Central Limit Theorem.
\end{proof}

The asymptotic variances $\sigma_*^2$ can be consistently estimated by
\begin{eqnarray}
\hat{\sigma}_*^2 &=& \frac{1}{n-1}\sum_{i = 1}^{n}\left\{\left(\hat{\Lambda}_{i,*} + \widehat{Q}_{i,*}\right) - \left(\overline{\Lambda}_* + \overline{Q}_*\right) \right\}^2 \nonumber\\
&& \times \: \left\{\widehat{\Pr}_*(D_1 = 1) \widehat{\Pr}_*(D_2 = 1) \widehat{\Pr}_*(D_3 = 1)\right\}^{-2},
\label{var_est}
\end{eqnarray}
where
\begin{eqnarray}
\hat{\Lambda}_{i,*} &=& \frac{1}{(n-1)(n-2)} \sum_{\stackrel{\ell = 1}{\ell \ne i}}^{n}\sum_{\stackrel{r = 1}{r \ne \ell, r \ne i}}^{n} \bigg\{G_{i \ell r,*}(\widehat{\mu}_*,\widehat{\boldsymbol{\eta}}, \widehat{\boldsymbol{\gamma}}) + G_{\ell i r,*}(\widehat{\mu}_*,\widehat{\boldsymbol{\eta}}, \widehat{\boldsymbol{\gamma}}) \nonumber\\
&& + \: G_{r \ell i,*}(\widehat{\mu}_*,\widehat{\boldsymbol{\eta}}, \widehat{\boldsymbol{\gamma}})\bigg\} \nonumber \\
\widehat{Q}_{i,*} &=& \frac{1}{(n-1)(n-2)}\sum_{i = 1}^{n}\sum_{\stackrel{\ell  = i}{\ell  \ne i}}^{n}\sum_{\stackrel{r = 1}{r \ne \ell , r \ne i}}^{n} \frac{\partial G_{i\ell r,*}(\widehat{\mu}_{*},\boldsymbol{\eta}, \boldsymbol{\gamma})}{\partial (\boldsymbol{\eta}, \boldsymbol{\gamma})^\top}\bigg|_{(\boldsymbol{\eta}, \boldsymbol{\gamma})^\top = (\widehat{\boldsymbol{\eta}}, \widehat{\boldsymbol{\gamma}})^\top} \nonumber \\
&& \times \: \begin{pmatrix}
\sum\limits_{i = 1}^{n}\frac{\partial}{\partial \boldsymbol{\eta}^\top} V_i u_i(\widehat{\boldsymbol{\eta}}) & \boldsymbol{0} \\
\sum\limits_{i = 1}^{n} \frac{\partial}{\partial \boldsymbol{\eta}^\top} s_i(\widehat{\boldsymbol{\gamma}}; \widehat{\boldsymbol{\eta}}) & \sum\limits_{i = 1}^{n} \frac{\partial}{\partial \boldsymbol{\gamma}^\top} s_i(\widehat{\boldsymbol{\gamma}}; \widehat{\boldsymbol{\eta}})
\end{pmatrix}^{-1} \begin{pmatrix}
V_i u_i(\widehat{\boldsymbol{\eta}}) \\ s_i(\widehat{\boldsymbol{\gamma}}; \widehat{\boldsymbol{\eta}})
\end{pmatrix} \nonumber
\end{eqnarray}
and
\[
\widehat{\Pr}_*(D_k = 1) = \frac{1}{n}\sum_{i = 1}^{n} \widehat{D}_{ki,*}, \qquad  \overline{\Lambda}_* = \frac{1}{n}\sum_{i = 1}^{n}\hat{\Lambda}_{i,*}, \qquad \overline{Q}_* = \frac{1}{n}\sum_{i = 1}^{n}\widehat{Q}_{i,*}.
\]

Clearly, alternative approaches to estimate the variance of the proposed VUS estimators are possible. For instance, one may resort to the classical bootstrap method.

\section{Simulation study}
\label{sec:simu}
In order to compare the  proposed mean score equation-based approach with the fully likelihood-based method in \citep{toduc2019vus} and to evaluate the performance of the new bias-corrected VUS estimators, we carry out  several simulation experiments. The number of replications is 1000 in each experiment, and the considered sample sizes are $150, 250, 500$ and $1000$. 

\subsection{Simulation setup}
\label{sec:simu:setup}
With regard to data generation, we consider the following six different scenarios: 
\begin{enumerate}[I.]
	\item For each unit, the test result $T$ and a covariate $A$ are generated by the  model
	\[
	(T, A) \sim \mathcal{N}\left( \begin{pmatrix} 3.7 \\ 1.85 \end{pmatrix}, 
	\begin{pmatrix}
	3.71 & 1.36 \\
	1.36 & 3.13
	\end{pmatrix} \right).
	\]
	The disease status $\mathcal{D} = (D_1, D_2, D_3)^\top$ is generated by the following multinomial logistic model:
	\begin{eqnarray}
	\log\left( \frac{\Pr(D_1 = 1| T, A)}{\Pr(D_3 = 1| T, A)} \right) &=& 15 - 3.3T - 0.7A, \nonumber \\
	\log\left( \frac{\Pr(D_2 = 1| T, A)}{\Pr(D_3 = 1| T, A)} \right) &=& 9.5 - 1.7T - 0.3A. \nonumber
	\end{eqnarray}
	The verification status $V$ is simulated by the following logistic model
	\[
	\Pr(V = 1| T, A, D_1, D_2) = \frac{\exp(2 + 0.5T - 1.2A - 2D_1 - D_2)}{1 + \exp(2 + 0.5T - 1.2A - 2D_1 - D_2)}.
	\]
	Under this setting, the verification rate is roughly 0.57 and the true VUS is 0.791. This setting is the same as Scenario I in \citet{toduc2019vus}.
	\item For each unit, the disease status $\mathcal{D}= (D_{1}, D_{2}, D_{3})^\top$ is generated from a multinomial distribution with $\theta_1= 0.7$, $\theta_2= 0.2$  and $\theta_3=0.1$ (recall that $\theta_k=\Pr(D_{k}=1),$ $k=1,2,3$). The test results $T$ and a covariate $A$ are generated as $T|D_{k} = 1 \sim \mathcal{N}(k - 1, 0.5^2)$ and $A|D_{k} = 1 \sim \mathcal{N}(0.5(k - 1), 0.5^2)$, while the verification status $V$ is simulated through the model
	\[
	\Pr(V = 1| T, D_1, D_2) = \frac{\exp(1 + T - 2D_1 - D_2)}{1 + \exp(1 + T - 2D_1 - D_2)}.
	\]
	The verification rate is roughly 0.44 and the true VUS is 0.843.
	\item The disease status $\mathcal{D}$ is generated as in scenario I. The test results $T$ and a covariate $A_{1}$ are generated as $T|D_{k} = 1 \sim \mathcal{N}(0.4(k - 1), 0.5^2)$ and $A_{1}|D_{k} = 1 \sim \mathcal{N}(0.5(k - 1), 0.5^2)$, for $k = 1,2,3$. A second covariate $A_{2}$ is simulated as $A_{2}|D_{1} = 1 \sim \mathcal{U}(-2,-1)$, $A_{2}|D_{2} = 1 \sim \mathcal{U}(-1,1)$ and $A_{2}|D_{3} = 1 \sim \mathcal{U}(1,2)$. The true VUS is 0.457. The verification status $V$ is simulated by a probit model
	\[
	\Pr(V = 1| T, A_1, D_1, D_2) = \Phi(1.5 + T - 0.5A_1 - 2D_1 - D_2),
	\]
	where $\Phi(\cdot)$ denotes cumulative distribution function of a standard normal. In this case, the verification rate is roughly 0.47.
	\item The disease status $\mathcal{D}$, the test result $T$ and a covariate $A$ are generated in the same way as in scenario II. The verification status $V$ is simulated through the model
	\[
	\Pr(V = 1| T, A, D_1, D_2) = \frac{\exp(1 + T - 0.5A - 2D_1 - D_2)}{1 + \exp(1 + T - 0.5A - 2D_1 - D_2)}.
	\]
	The verification rate is roughly 0.42, and the true VUS still is 0.843.
	\item We generate the test result $T \sim \mathcal{U}(-3, 3)$, two covariates $A_1 \sim \mathcal{N}(0,1)$ and $A_2 \sim \mathrm{Ber}(0.6)$. The disease status $\mathcal{D} = (D_1, D_2, D_3)^\top$ is generated by using the multinomial logistic model
	\begin{eqnarray}
	\log \left(\frac{\Pr(D_1 = 1|T, A_1, A_2)}{\Pr(D_3 = 1|T, A_1, A_2)}\right) &=& 5 - 6T + 2A_1 + A_2, \nonumber \\
	\log \left(\frac{\Pr(D_2 = 1|T, A_1, A_2)}{\Pr(D_3 = 1|T, A_1, A_2)}\right) &=& 4 - 3T + 4A_1 + 2A_2. \nonumber
	\end{eqnarray}
	The verification status $V$ is simulated by the following model:
	\[
	\log \left(\frac{\Pr(V = 1| T, A_1, D_1, D_2)}{\Pr(V = 0| T, A_1, D_1, D_2)}\right) = 1 + 1.5T - A_1 + 2A_2 - 1.5D_1 - 2D_2.
	\]
	In this scenario, the verification rate is roughly $0.56$. The true VUS value is $0.74$.
	\item We generate the test result $T$, and two covariates $A_1, A_2$ as in scenario IV. The disease status $\mathcal{D} = (D_1, D_2, D_3)^\top$ is generated by using the multinomial logistic model
	\begin{eqnarray}
	\log \left(\frac{\Pr(D_1 = 1|T, A_1, A_2)}{\Pr(D_3 = 1|T, A_1, A_2)}\right) &=& 5 - 6T + 2A_1 + A_2 + A_1A_2, \nonumber \\
	\log \left(\frac{\Pr(D_2 = 1|T, A_1, A_2)}{\Pr(D_3 = 1|T, A_1, A_2)}\right) &=& 4 - 3T + 4A_1 + 2A_2 + 0.5A_1A_2. \nonumber
	\end{eqnarray}
	The verification status $V$ is simulated by the following model:
	\[
	\log \left(\frac{\Pr(V = 1| T, A_1, D_1, D_2)}{\Pr(V = 0| T, A_1, D_1, D_2)}\right) = 1 + 2T - 1.5A_1 - D_1 - 2D_2.
	\]
	The verification rate is roughly $0.46$, and the true VUS value is $0.728$.
\end{enumerate}

Scenarios I and II aim to compare the discussed mean score equation-based approach with the fully likelihood-based one, in terms of accuracy of  parameter estimators in the verification model, and VUS estimators. As pointed out above, scenario I is identical to Scenario I in \citet{toduc2019vus} and does not involve instrumental variables. On the contrary, in scenario II the covariate $A$ plays the role of  instrumental variable. Moreover, scenarios II to VI  aim to assess the finite-sample behavior of new proposed bias-corrected VUS estimators.  In particular, as explained below, scenarios IV, V and VI
serve to evaluate the effects on the estimators of some types of misspecifications in the working models.

By applying the Bayes' rule, for all considered scenarios, the true conditional (to $V=1$) disease process takes the following form:
\begin{eqnarray}
\log\left(\frac{\Pr(D_1 = 1| T, \mathbf{A}, V = 1)}{\Pr(D_3 = 1| T, \mathbf{A}, V = 1)}\right) &=& \log\left(\frac{\pi_{10}}{\pi_{00}}\right) + \log\left(\frac{\Pr(D_1 = 1| T, \mathbf{A})}{\Pr(D_3 = 1| T, \mathbf{A})}\right), \nonumber \\
\log\left(\frac{\Pr(D_2 = 1| T, \mathbf{A}, V = 1)}{\Pr(D_3 = 1| T, \mathbf{A}, V = 1)}\right) &=& \log\left(\frac{\pi_{01}}{\pi_{00}}\right) + \log\left(\frac{\Pr(D_2 = 1| T, \mathbf{A})}{\Pr(D_3 = 1| T, \mathbf{A})}\right), \nonumber
\end{eqnarray}
where $\mathbf{A}$ is simply the covariate $A$ in the scenarios I, II, IV and V,  $\mathbf{A} =(A_1, A_2)^\top$ in scenarios III and VI and $\pi_{10}$, $\pi_{01}$ and $\pi_{00}$ denote $\Pr(V = 1|T, \mathbf{A}, D_1 = 1, D_2 = 0)$, $\Pr(V = 1|T, \mathbf{A}, D_1 = 0, D_2 = 1)$ and $\Pr(V = 1|T, \mathbf{A}, D_1 = 0, D_2 = 0)$, respectively.  Clearly, in practice this is a multinomial logistic model with unknown offset terms (intercepts).

In making simulations, when considering the new mean score equation-based approach,
we fit the the conditional disease model as a multinomial logistic one, 
i.e., we specify the working conditional disease model as follows:
\begin{eqnarray}
\log\left(\frac{\Pr(D_1 = 1| T, \mathbf{A}, V = 1)}{\Pr(D_3 = 1| T, \mathbf{A}, V = 1)}\right) &=& \eta^{(1)}_0 + \eta^{(1)}_1 T + \mathbf{A}^\top \eta^{(1)}_2, \nonumber \\
\log\left(\frac{\Pr(D_2 = 1| T, \mathbf{A}, V = 1)}{\Pr(D_3 = 1| T, \mathbf{A}, V = 1)}\right) &=& \eta^{(2)}_0 + \eta^{(2)}_1 T + \mathbf{A}^\top \eta^{(2)}_2, \nonumber
\end{eqnarray}
where $\mathbf{A} = A$ for scenarios I, II and IV, $\mathbf{A} = (A_1, A_2)^\top$ for  scenarios III and V and $\mathbf{A} = (A_1^2, A_2)^\top$ for scenario VI. As for the verification model, in the estimation procedure we use the correct logistic model in scenarios I, II, III and VI, while in scenarios IV  we use a logistic models involving only $T$, and  in scenario V we use a probit model involving only $T$ and the covariate $A_1.$ Therefore, the working models are correctly specified in scenarios I, II and III,  while they present various types of misspecification in scenarios IV, V and VI. More precisely: in scenario IV, we omit the covariate $A$ in the verification model and use it as instrumental variable; in scenario V, we omit the covariate $A_2$ in the verification model, use it as instrumental variable and incorrectly define the model (probit instead of logit); in scenario VI, we use $A_1^2$ instead of $A_1$ and omit the interaction term in the conditional disease model. Finally, as for the fully likelihood-based approach, in scenarios I and II, the working (unconditional) disease model and the working verification model are both correctly specified.

\subsection{Computational aspects}
In order to obtain the new VUS estimates, in our simulation we need to solve, at each replication, the score equation $U(\boldsymbol{\eta}) = \boldsymbol{0}$, and then the empirical mean score equation $\bar{S}(\boldsymbol{\gamma}; \widehat{\boldsymbol{\eta}}) = \boldsymbol{0}$. The first  equation refers to the conditional (to $V=1$) disease model and, since we work with a multinomial logistic model in all scenarios, we resort to  the routine \texttt{multinom()} (of \texttt{R} package \texttt{nnet}) to solve it. Given $\widehat{\boldsymbol{\eta}}$, the empirical mean score equation $\bar{S}(\boldsymbol{\gamma}; \widehat{\boldsymbol{\eta}}) = \boldsymbol{0}$ could be solved in $\boldsymbol{\gamma}$ by using some numeric algorithm, such the ones deployed in the \texttt{R} routine \texttt{nleqslv()}. However, numerical algorithms for solving equations are often poorly stable and reliable, or, better, are less reliable than optimization algorithms. Therefore, in order to  obtain $\widehat{\boldsymbol{\gamma}}$, we minimize the squared Euclidean norm of $\bar{S}(\boldsymbol{\gamma}; \widehat{\boldsymbol{\eta}})$, i.e., $\left\|\bar{S}(\boldsymbol{\gamma}; \widehat{\boldsymbol{\eta}})\right\|^2$. The optimization  is  performed by using the \texttt{L-BFGS-B} algorithm supplied in the  \texttt{R} routine \texttt{optim()}. 

\subsection{Results}
\label{sec:simu:compar}
Simulation results are given in tables 1-3. Table \ref{tab:res_para} refers to the comparison between the mean score equation-based approach (MSEq) and the  fully likelihood-based approach (logLike). For three different sample sizes, i.e., 150, 250 and 500, the table provides Monte Carlo means for the estimators of the parameters in the verification model, together with Monte Carlo means for the  four (FI, MSI, IPW and PDR) VUS estimators, for scenarios I and II. Recall that, in such scenarios, the considered working models are correctly specified. The results clearly show that the new approach achieves better performance, providing less biased estimators, already at the smallest sample size. 

\begin{table}[htbp]
	\centering
	\caption{Monte Carlo means for the estimators of the parameters in the verification model and for the corresponding four bias--corrected VUS estimators. The true values are: for scenario I, $\beta_0 = 2, \beta_1 = 0.5, \beta_2 = -1.2, \lambda_1 = -2, \lambda_2 = -1$ and VUS $= 0.791$; for scenario II, $\beta_0 = 1, \beta_1 = 1, \lambda_1 = -2, \lambda_2 = -1$ and VUS $= 0.843$.} 
	\label{tab:res_para}
	\begingroup\scriptsize
	\begin{tabular}{ccrr|rr|rr}
		\toprule
		& & \multicolumn{2}{c|}{$n = 150$} & \multicolumn{2}{c|}{$n = 250$} & \multicolumn{2}{c}{$n = 500$} \\
		& & MSEq & logLike & MSEq & logLike & MSEq & logLike \\
		\midrule
		\multirow{ 9 }{*}{ Scenario I } & $\widehat{\beta}_0$ & 2.081 & 0.750 & 1.992 & 1.189 & 2.274 & 1.806 \\ 
		& $\widehat{\beta}_1$ & 0.595 & 0.782 & 0.550 & 0.658 & 0.491 & 0.546 \\ 
		& $\widehat{\beta}_2$ & $-$1.281 & $-$1.271 & $-$1.255 & $-$1.236 & $-$1.230 & $-$1.217 \\ 
		& $\widehat{\lambda}_1$ & $-$1.870 & $-$0.127 & $-$1.919 & $-$0.943 & $-$2.192 & $-$1.763 \\ 
		& $\widehat{\lambda}_2$ & $-$0.132 & $-$0.122 & $-$0.341 & $-$0.435 & $-$0.902 & $-$0.867 \\ 
		& FI & 0.775 & 0.761 & 0.776 & 0.771 & 0.784 & 0.783 \\ 
		& MSI & 0.772 & 0.757 & 0.774 & 0.769 & 0.783 & 0.783 \\ 
		& IPW & 0.778 & 0.765 & 0.776 & 0.770 & 0.783 & 0.782 \\ 
		& PDR & 0.773 & 0.757 & 0.773 & 0.765 & 0.781 & 0.782 \\ 
		\midrule
		\multirow{ 8 }{*}{ Scenario II } & $\widehat{\beta}_0$ & 3.154 & 5.071 & 2.119 & 5.217 & 1.553 & 4.011 \\ 
		& $\widehat{\beta}_1$ & 1.108 & 1.870 & 1.078 & 1.452 & 1.039 & 1.154 \\ 
		& $\widehat{\lambda}_1$ & $-$4.131 & $-$5.173 & $-$3.115 & $-$5.721 & $-$2.549 & $-$4.895 \\ 
		& $\widehat{\lambda}_2$ & $-$0.580 & $-$4.365 & $-$0.899 & $-$4.757 & $-$1.274 & $-$3.913 \\ 
		& FI & 0.844 & 0.799 & 0.845 & 0.822 & 0.843 & 0.835 \\ 
		& MSI & 0.841 & 0.795 & 0.843 & 0.820 & 0.841 & 0.833 \\ 
		& IPW & 0.841 & 0.822 & 0.842 & 0.829 & 0.841 & 0.836 \\ 
		& PDR & 0.839 & 0.801 & 0.842 & 0.825 & 0.841 & 0.835 \\ 
		\bottomrule
	\end{tabular}
	\endgroup
\end{table}

Results in Table \ref{tab:res_II_III} allow us to evaluate the behavior of the proposed VUS estimators, and of the corresponding variance estimators based on (\ref{var_est}), when the considered working models are correctly specified.
For both scenarios II and III and four different sample sizes (150, 250, 500 and 1000), the table contains Monte Carlo relative biases, Monte Carlo standard deviations and (means of) estimated standard deviations based on (\ref{var_est}), for the new VUS estimators. The table also provides the empirical coverages of the 95\% confidence intervals for the VUS, obtained through the normal approximation approach applied to each estimator and the use of the corresponding variance estimator (\ref{var_est}).

Overall, Table \ref{tab:res_II_III} shows that the proposed approach provides satisfactory results in terms of relative bias and coverage probability when the verification models are correctly specified.  At the smallest sample size, the larger absolute relative bias is equal to $1.4\%$ (scenario III). As expected, when the sample size increases, the relative bias of the  VUS estimators decreases, becoming negligible for $n \ge 250$.  Moreover, especially for scenario III (in which the verification process follows a probit model), in case of smallest sample size, $n=150$, the (means of) estimated standard deviations based on (\ref{var_est}) still seem to be affected by a poor accuracy in the estimation of the parameters in the working models. However, in similar cases, as shown in the table, more accurate estimates of standard deviations can be obtained via bootstrap. Bootstrap estimates also improve empirical coverage probabilities of confidence intervals, when used within the normal approximation approach (see also the results for scenario II). 

We emphasize that, when working models are correctly specified,  low-accuracy problems can  be related to low verification rates. For example, in scenario III the verification rate is 0.47. If we change the verification process and use the following model to simulate $V_i$ values:
\[
\Pr(V = 1| T, A_1, D_1, D_2) = \Phi(2.5 + T - 1.2A_1 - 2D_1 - D_2),
\]
the verification rate becomes roughly 0.72, and the corresponding simulation results (for $n=150$ and 1000 replications), given below,
\begin{table}[h]
	\centering
	\begingroup\footnotesize
	\begin{tabular}{lrrrr}
		\   & Bias(\%) & MCSD & ASD & CP.Asy(\%) \\
		\midrule
		FI & $-$0.6 & 0.070 & 0.075  & 95.8 \\
		MSI & $-$0.6 & 0.070 & 0.075  & 95.8 \\
		IPW & 0.1 & 0.074 & 0.069 & 93.2 \\
		PDR & $-$0.6 & 0.070 & 0.075 & 95.8 \\
		\midrule
	\end{tabular}
	\endgroup
\end{table}
are clearly better than those in Table \ref{tab:res_II_III}.
\begin{sidewaystable}
	\begin{center}
		\caption{Relative  biases (\%), Monte Carlo standard deviations (MCSD), means of estimated  standard deviation based on (\ref{var_est})  (ASD), means of  bootstrap standard deviation based on 250 bootstrap replications (BSD), empirical coverage probabilities (based on asymptotic theory and (\ref{var_est})  (CP.Asy) or asymptotic theory and bootstrap standard deviations  (CP.Bst)) of 95\% confidence intervals for the VUS  (\%). The working models are correctly specified.}
		\label{tab:res_II_III}
		\begingroup\scriptsize
		\begin{tabular}{llrrrrrr|rrrrrr}
			\toprule
			& & \multicolumn{6}{c|}{Scenario II: VUS = 0.843} & \multicolumn{6}{c}{Scenario III: VUS = 0.457} \\
			\cmidrule{3-14}
			& & Bias & MCSD & ASD & BSD & CP.Asy & CP.Bst & Bias & MCSD & ASD & BSD & CP.Asy & CP.Bst \\
			\midrule
			\multirow{4}{*}{$n = 150$} & FI & 0.2 & 0.054 & 0.055 & 0.057 & 88.9 & 92.2 & $-$1.4 & 0.070 & 0.144 & 0.075 & 96.8 & 94.5 \\ 
			& MSI & $-$0.3 & 0.055 & 0.072 & 0.059 & 90.0 & 92.8 & $-$1.4 & 0.070 & 0.144 & 0.075 & 96.9 & 94.5 \\ 
			& IPW & $-$0.3 & 0.059 & 0.064 & 0.061 & 88.9 & 91.3 & $-$0.6 & 0.083 & 0.083 & 0.087 & 93.8 & 94.3 \\ 
			& PDR & $-$0.4 & 0.061 & 0.088 & 0.073 & 89.1 & 92.8 & $-$1.4 & 0.070 & 0.144 & 0.075 & 96.9 & 94.5 \\ 
			\midrule
			\multirow{4}{*}{$n = 250$} & FI & 0.3 & 0.041 & 0.038 & -- & 91.0 & -- & $-$0.1 & 0.052 & 0.063 & -- & 96.1 & -- \\ 
			& MSI & $-$0.0 & 0.042 & 0.040 & -- & 92.3 & -- & $-$0.1 & 0.052 & 0.062 & -- & 96.1 & -- \\ 
			& IPW & $-$0.0 & 0.044 & 0.045 & -- & 91.6 & -- & $-$0.2 & 0.063 & 0.064 & -- & 94.5 & -- \\ 
			& PDR & $-$0.1 & 0.044 & 0.049 & -- & 91.2 & -- & $-$0.1 & 0.052 & 0.063 & -- & 96.2 & -- \\
			\midrule
			\multirow{4}{*}{$n = 500$} & FI & $-$0.0 & 0.028 & 0.026 & -- & 92.5 & -- & $-$0.8 & 0.037 & 0.040 & -- & 95.6 & -- \\ 
			& MSI & $-$0.2 & 0.028 & 0.028 & -- & 93.5 & -- & $-$0.8 & 0.037 & 0.039 & -- & 95.6 & -- \\ 
			& IPW & $-$0.2 & 0.030 & 0.029 & -- & 93.0 & -- & $-$0.5 & 0.043 & 0.045 & -- & 95.3 & -- \\ 
			& PDR & $-$0.2 & 0.030 & 0.028 & -- & 92.3 & -- & $-$0.8 & 0.036 & 0.039 & -- & 95.7 & -- \\ 
			\midrule
			\multirow{4}{*}{$n = 1000$} & FI & 0.1 & 0.019 & 0.019 & -- & 95.5 & -- & $-$0.2 & 0.025 & 0.029 & -- & 95.6 & -- \\ 
			& MSI & 0.0 & 0.019 & 0.020 & -- & 95.8 & -- & $-$0.2 & 0.025 & 0.029 & -- & 95.5 & -- \\ 
			& IPW & 0.0 & 0.020 & 0.021 & -- & 95.4 & -- & 0.0 & 0.031 & 0.032 & -- & 95.6 & -- \\ 
			& PDR & 0.0 & 0.020 & 0.020 & -- & 94.8 & -- & $-$0.2 & 0.025 & 0.029 & -- & 95.5 & -- \\ 
			\bottomrule
		\end{tabular}
		\endgroup
	\end{center}
\end{sidewaystable}

Table \ref{tab:res_IV_V_VI} presents the simulation results for scenarios IV-VI,  in which the working models are, in various ways,  misspecified. 
Again, for each scenario and four different sample sizes, the table gives 
Monte Carlo relative bias, Monte Carlo standard deviations and (means of) estimated standard deviations based on (\ref{var_est}), for the new VUS estimators. The table also provides the empirical coverages of the 95\% confidence intervals for the VUS, obtained through the normal approximation approach applied to each estimator and the use of the corresponding variance estimator (\ref{var_est}).

In the considered scenarios, all proposed VUS estimators appear to be biased, even when the sample size increases. However, simulation results appear to stay on acceptable levels (the relative biases are not too large and the coverage probabilities are close to 95\%) in scenarios IV and V, i.e., when the misspecification follows by the omission of a variable and/or a wrong choice of the link function in the working generalized linear model for the verification process. Clearly, due to the strong deviation between the working conditional disease model and the true one, the results are significantly worse in the scenario VI. Anyway, variance  estimators (\ref{var_est}) seem relatively more robust with respect to the considered  misspecifications in the working models.

\begin{sidewaystable}
	\begin{center}
		\caption{Relative  biases (\%), Monte Carlo standard deviations (MCSD), means of estimated  standard deviation based on (\ref{var_est})  (ASD), empirical coverage probabilities (based on asymptotic theory and (\ref{var_est})  (CP.Asy) of 95\% confidence intervals for the VUS  (\%).The working verification model (scenarios IV and V) and the working conditional disease model (scenario VI) are misspecified.}
		\label{tab:res_IV_V_VI}
		\begingroup\scriptsize
		\begin{tabular}{llrrrr|rrrr|rrrr}
			\toprule
			& & \multicolumn{4}{c|}{Scenario IV: VUS = 0.843} & \multicolumn{4}{c|}{Scenario V: VUS = 0.74} & \multicolumn{4}{c}{Scenario VI: VUS = 0.728} \\
			\cmidrule{3-14}
			& & Bias & MCSD & ASD & CP & Bias & MCSD & ASD & CP & Bias & MCSD & ASD & CP \\
			\midrule
			\multirow{4}{*}{$n = 150$} & FI & $-$0.8 & 0.059 & 0.055 & 89.9 & $-$3.6 & 0.090 & 0.063 & 89.6 & $-$5.2 & 0.081 & 0.068 & 90.7 \\ 
			& MSI & $-$1.1 & 0.059 & 0.060 & 91.4 & $-$3.7 & 0.090 & 0.062 & 89.6 & $-$5.4 & 0.082 & 0.073 & 91.8 \\ 
			& IPW & $-$0.1 & 0.064 & 0.063 & 88.9 & $-$2.2 & 0.070 & 0.066 & 95.4 & $-$5.5 & 0.083 & 0.071 & 91.1 \\ 
			& PDR & $-$1.1 & 0.067 & 0.066 & 90.3 & $-$3.4 & 0.091 & 0.079 & 90.9 & $-$3.7 & 0.097 & 0.074 & 91.2 \\ 
			\midrule
			\multirow{4}{*}{$n = 250$} & FI & $-$0.7 & 0.044 & 0.041 & 92.8 & $-$3.6 & 0.081 & 0.046 & 89.7 & $-$5.2 & 0.067 & 0.049 & 87.8 \\ 
			& MSI & $-$1.0 & 0.044 & 0.041 & 93.8 & $-$3.7 & 0.081 & 0.045 & 89.2 & $-$5.4 & 0.068 & 0.055 & 90.3 \\ 
			& IPW & $-$0.1 & 0.050 & 0.050 & 91.7 & $-$2.2 & 0.059 & 0.055 & 94.8 & $-$5.5 & 0.065 & 0.060 & 90.0 \\ 
			& PDR & $-$1.1 & 0.051 & 0.051 & 92.3 & $-$3.4 & 0.083 & 0.054 & 90.4 & $-$3.7 & 0.068 & 0.056 & 90.7 \\ 
			\midrule
			\multirow{4}{*}{$n = 500$} & FI & $-$0.9 & 0.031 & 0.028 & 92.7 & $-$1.8 & 0.050 & 0.032 & 94.4 & $-$5.0 & 0.047 & 0.036 & 81.9 \\ 
			& MSI & $-$1.1 & 0.031 & 0.030 & 94.9 & $-$1.8 & 0.050 & 0.031 & 93.6 & $-$5.2 & 0.047 & 0.040 & 87.2 \\ 
			& IPW & $-$0.1 & 0.034 & 0.032 & 91.7 & $-$1.3 & 0.039 & 0.045 & 97.1 & $-$5.4 & 0.049 & 0.040 &  85.0 \\ 
			& PDR & $-$1.1 & 0.036 & 0.033 & 92.9 & $-$1.7 & 0.053 & 0.036 & 94.1 & $-$3.4 & 0.048 & 0.035 & 87.5 \\ 
			\midrule
			\multirow{4}{*}{$n = 1000$} & FI & $-$0.8 & 0.020 & 0.020 & 94.1 & $-$1.1 & 0.023 & 0.022 & 96.2 & $-$5.2 & 0.031 & 0.025 & 69.1 \\ 
			& MSI & $-$0.9 & 0.020 & 0.021 & 95.1 & $-$1.1 & 0.023 & 0.021 & 94.9 & $-$5.3 & 0.031 & 0.029 & 78.0 \\ 
			& IPW & 0.2 & 0.022 & 0.023 & 93.9 & $-$1.0 & 0.024 & 0.035 & 97.5 & $-$5.1 & 0.034 & 0.029 & 74.8 \\ 
			& PDR & $-$0.8 & 0.023 & 0.023 & 95.4 & $-$0.9 & 0.028 & 0.024 & 95.1 & $-$3.2 & 0.032 & 0.025 & 82.0 \\ 
			\bottomrule
		\end{tabular}
		\endgroup
	\end{center}
\end{sidewaystable}

\section{Application}
\label{sec:app}
We illustrate an application of our method by using data from  the Pre-PLCO Phase II Dataset, from the SPORE/Early Detection Network/Prostate, Lung, Colon, and Ovarian Cancer Study. The study protocol and the data are publicly available at the address \footnote{\url{http://edrn.nci.nih.gov/protocols/119-spore-edrn-pre-plco-ovarian-phase-ii-validation}}. In this study, several serum protein biomarkers are evaluated in terms of their ability to correctly classify possible epithelial ovarian cancer (EOC) cases into benign disease, early stage (I-II) and late stage (III-IV). Among such biomarkers, we are interested in insulin-like growth factor-II (IGF-II), previously studied in \citet{mor2005serum} and \citet{visintin2008diagnostic} to distinguish non-diseased from cancer cases (early stage and the late stage). The data we consider refer to 156 patients with benign disease, 71 patients with early stage, and 82 with late stage. Moreover, the data present 87 unverified units; therefore,  only 78\% of patients receive the assessment of their true disease status.
\begin{figure}[htpb]
	\begin{center}
		\includegraphics[width=0.7\textwidth]{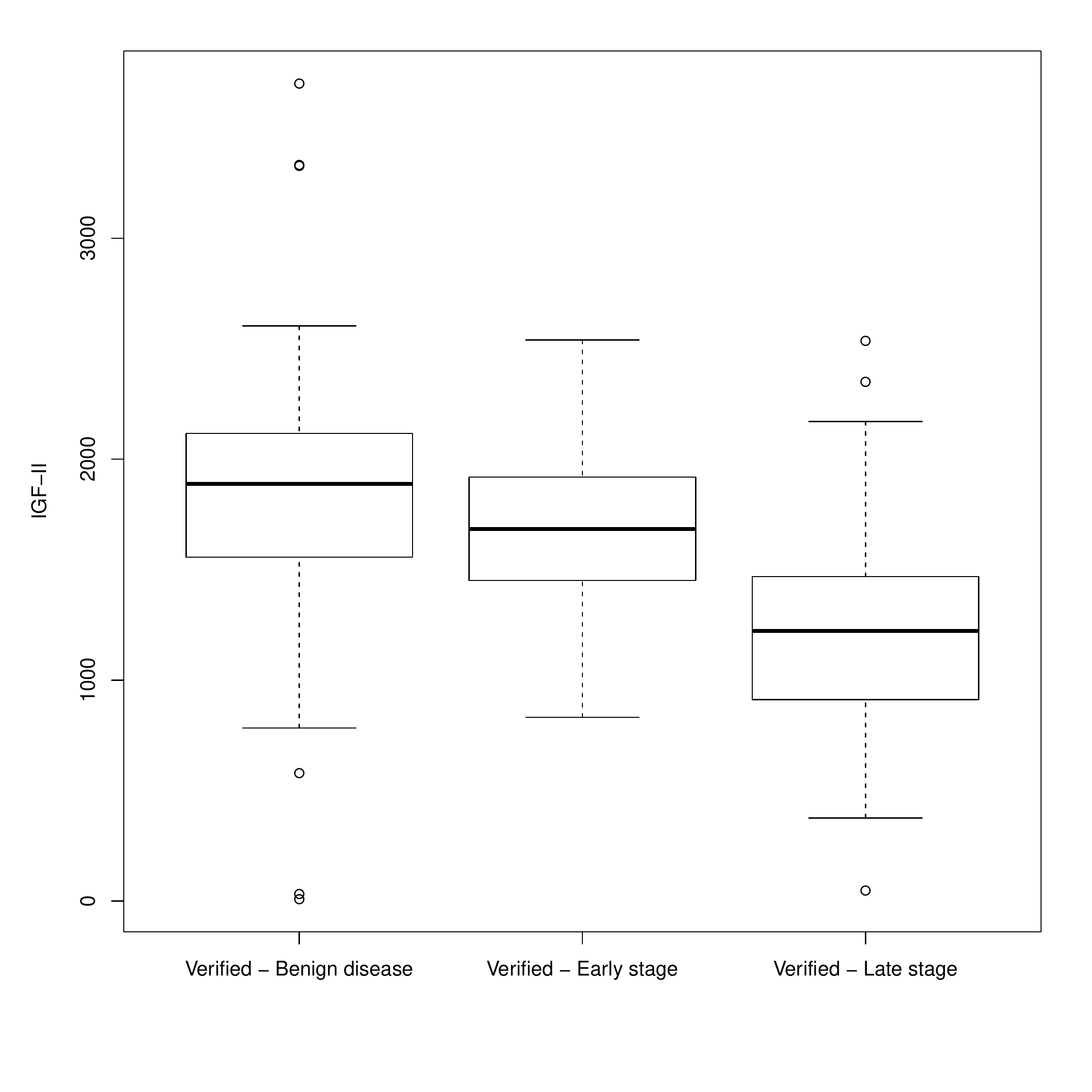}
		\caption{Boxplot for the IGF-II values, stratified by the disease status (verified units only).}
		\label{fig:igf2}
	\end{center}
\end{figure}

The boxplots for the IGF-II values, stratified by the disease status and based on the verified units only, are presented in Figure \ref{fig:igf2}, which show that lower values of IGF-II are associated with higher severity of disease; thus, in our analysis, we consider minus the values of the IGF-II as test values. Moreover, we consider two other serum proteins, i.e., HE4 and CA125, as covariates. 

Our goal is to estimate the VUS for the IGF-II biomarker. In the analysis, we standardize the values of IGF-II, HE4 and CA125, and below, for convenience, denote them as $T$, $A_1$ and $A_2$, respectively. We also let $D_1$ and $D_2$ to be the  variables corresponding to benign disease and early stage, respectively.

We fit the conditional (to $V=1$) disease model  by using a multinomial logistic regression model with $T$, $A_1$ and $A_2$ as predictors and late stage as reference level. As for the verification model, we consider a logistic model and a probit model, with only $T$ as a predictor; hence, $A_1$ and $A_2$ are used as instrumental variables. The estimated parameters, estimated  standard errors and corresponding $p$-values for the significance test for coefficients in the models are given in Table \ref{tab:eoc_para}. The $p$-values refer to the Wald-type statistics, which use results in (\ref{est:Sigma:gamma}). The resulting $p$-values for ignorable/nonignorable coefficients in both models (logistic and probit), indicate that, in such case, the missing data mechanism is nonignorable.

\begin{table}[htbp]
	\begin{center}
		\caption{Application to IGF-II biomarker: estimated coefficients (Est.), standard errors (SE) and corresponding $p$-values 
			for (conditional) disease and  verification models; VUS estimates, along with related standard errors and 95\% confidence intervals.}
		\label{tab:eoc_para}
		\begin{scriptsize}
			\begin{tabular}{c r r r | r r r }
				\toprule
				& \multicolumn{6}{c}{Disease model} \\
				\cmidrule(l){2-7}
				& \multicolumn{3}{c|}{Benign disease} & \multicolumn{3}{c}{Early stage} \\
				\cmidrule(l){2-7}
				& Est. & SE & $p$-value & Est. & SE & $p$-value \\ 
				\midrule
				Intercept & $-$1.832 & 0.786 & 0.020 & 0.528 & 0.218 & 0.015 \\ 
				$T$ & $-$0.922 & 0.204 & $<$ 0.001 & $-$0.880 & 0.205 & $<$ 0.001 \\
				$A_1$ & $-$1.289 & 0.444 & 0.004 & $-$0.625 & 0.168 & $<$ 0.001 \\
				$A_2$ & $-$17.131 & 4.325 & $<$ 0.001 & $-$0.026 & 0.141 & 0.852 \\
				\toprule
				& \multicolumn{6}{c}{Verification model} \\
				\cmidrule(l){2-7}
				& \multicolumn{3}{c|}{Logit link} & \multicolumn{3}{c}{Probit link} \\
				\cmidrule(l){2-7}
				& Est. & SE & $p$-value & Est. & SE & $p$-value \\ 
				Intercept & 6.186 & 1.455 & $<$0.001 & 2.966 & 0.621 & $<$0.001\\
				$T$ & $-$1.338 & 0.376 & $<$0.001 & $-$0.659 & 0.162 & $<$0.001\\
				$D_1$ & $-$5.733 & 1.501 & $<$0.001 & $-$2.649 & 0.639 & $<$0.001\\
				$D_2$ & 4.207 & 0.961 & $<$0.001 & 1.899 & 0.631 & 0.003\\
				\toprule
				& \multicolumn{6}{c}{bias--corrected VUS} \\
				\cmidrule(l){2-7}
				& \multicolumn{3}{c|}{Logit link} & \multicolumn{3}{c}{Probit link} \\
				\cmidrule(l){2-7}
				& Est. & SE & 95\% CI & Est. & SE & 95\% CI \\
				\cmidrule(l){2-7}
				FI & 0.317 & 0.025 & $(0.268, 0.366)$ & 0.317 & 0.025 & $(0.268, 0.366)$\\
				MSI & 0.343 & 0.028 & $(0.287, 0.398)$ & 0.343 & 0.028 & $(0.287, 0.398)$\\
				IPW & 0.243 & 0.087 & $(0.073, 0.413)$ & 0.299 & 0.056 & $(0.188, 0.410)$\\
				PDR & 0.342 & 0.027 & $(0.289, 0.395)$ & 0.341 & 0.030 & $(0.283, 0.399)$\\
				\bottomrule
			\end{tabular} 
		\end{scriptsize}
	\end{center}
\end{table}

Table \ref{tab:eoc_para} also provides the four bias-corrected VUS estimates, along with  the  standard errors and 95\% confidence intervals (CI), constructed using normal approximation and (\ref{var_est}). The Na\"{i}ve VUS estimate, calculated from verified subjects only, is 0.460 with standard error 0.035 and  95\% CI $(0.393,  0.528)$. Therefore, all bias-corrected VUS estimates are significantly different to the Na\"{i}ve one. Moreover, unlike the logistic model, the IPW estimate in probit model is comparable to the other VUS estimates. This may suggest that the probit link function is a good choice for the generalized regression model describing the verification process for these data. 

Finally, we highlight the purely illustrative nature of this application and the fact that, in real applications, the approach proposed in this paper allows to evaluate the model chosen to describe the (conditional) disease process by diagnostic procedures and/or {\it ad hoc} statistical tests present in the litterature,  as those, for instance, in \citet{goeman2006goodness} for the multinomial logistic regression model. 

\section{Conclusion}
\label{sec:concl}
In this paper, we propose a mean score equation-based approach to estimate the volume under a ROC surface when the disease status is missing not at random. This approach can  take advantage of the use of instrumental variables, which help avoid possible identifiability problems. We prove that our proposed bias--corrected VUS estimators are consistent and asymptotically normal. 

Our method is essentially a parametric method, and requires the estimation of the parameters of a (parametric) model chosen to describe the conditional (to $V=1$) disease process, and then the solution of an empirical mean score equation resulting from a parametric model chosen for the verification process. The method may also be used in a semiparametric context, by resorting to a nonparametric regression approach to fit the conditional disease model, as in \citet{morikawa2017semiparametric}. Clearly, this topic deserves further investigation.

Compared to the  fully likelihood-based approach discussed in \citet{toduc2019vus}, the new method is simpler, and more flexible  also thanks to the possible use of instrumental variables. Moreover, as our simulation results show, the new bias--corrected VUS estimators are generally more accurate with moderate sample sizes.

Resorting to instrumental variables may be necessary when relying on working models other than those, namely logistic and multinomial logistic, assumed by \citet{toduc2019vus}  to describe, respectively, the verification and  disease process. In practice, however, the choice of variables that could play the role of instrumental variables is not trivial. With respect to this issue,  we suggest a possible selection strategy that foresees three steps:
\begin{enumerate}[(i)]
	\item choose a working model for the conditional (to $V=1$) disease process and select (by using a standard  backward stepwise regression) a vector of covariates $\mathbf{A}$, statistically significant;
	\item choose a working model for the verification process and, given the estimate $\widehat{\boldsymbol{\eta}}$ obtained in step (i), use, one at a time,  the elements of $\mathbf{A}$,  jointly with the test $T$,  to solve the resulting empirical mean score equation 
	$\bar{S}(\boldsymbol{\gamma}; \widehat{\boldsymbol{\eta}}) = \boldsymbol{0}$;
	\item  take as vector of instrumental variables, the vector $\mathbf{A}_2$ consisting of the elements of $\mathbf{A}$ resulting statistically non-significant at the previous step (ii).
\end{enumerate}
Then, the final estimate $\widehat{\boldsymbol{\gamma}}$ is obtained by solving the empirical mean score equation based on $T$ and, when $\mathbf{A}_2$ does not coincide with $\mathbf{A}$, the component $\mathbf{A}_1$  of $\mathbf{A} = \left(\mathbf{A}^\top_1, \mathbf{A}^\top_2\right)^\top$. This is also the selection strategy  that we used in the illustrative application of Section \ref{sec:app}.

\appendix
\section{Conditional probabilities of disease for unverified subjects}
\label{app:A}
Applying Bayes' rule, we have
\[
\Pr(D_k = 1|V = v, T, \mathbf{A}) = \frac{\Pr(V = v|T, \mathbf{A}_1, D_k = 1)\Pr(D_k = 1| T, \mathbf{A})}{\Pr(V = v|T, \mathbf{A})}
\]
for $v = 0, 1$. Thus, we obtain the following ratios 
\begin{eqnarray}
\frac{\Pr(D_1 = 1|V = 1, T, \mathbf{A})}{\Pr(D_1 = 1|V = 0, T, \mathbf{A})} \bigg / \frac{\Pr(D_3 = 1|V = 1, T, \mathbf{A})}{\Pr(D_3 = 1|V = 0, T, \mathbf{A})} &=& \frac{\mathrm{Odd} (T, \mathbf{A}_1, 1,0; \boldsymbol{\gamma})}{\mathrm{Odd} (T, \mathbf{A}_1, 0,0; \boldsymbol{\gamma})}, \nonumber \\ 
\frac{\Pr(D_2 = 1|V = 1, T, \mathbf{A})}{\Pr(D_2 = 1|V = 0, T, \mathbf{A})} \bigg / \frac{\Pr(D_3 = 1|V = 1, T, \mathbf{A})}{\Pr(D_3 = 1|V = 0, T, \mathbf{A})} &=& \frac{\mathrm{Odd} (T, \mathbf{A}_1, 0,1; \boldsymbol{\gamma})}{\mathrm{Odd} (T, \mathbf{A}_1, 0,0; \boldsymbol{\gamma})}, \nonumber 
\end{eqnarray}
where
\[
\mathrm{Odd} (T, \mathbf{A}_1, d_1, d_2; \boldsymbol{\gamma}) = \frac{\pi(T, \mathbf{A}_1, d_1, d_2; \boldsymbol{\gamma})}{1 - \pi(T, \mathbf{A}_1, d_1, d_2; \boldsymbol{\gamma})}
\]
with $(d_1, d_2) \in \left\{(1,0), (0,1), (0,0)\right\}$. These expressions are equivalent to the system of equations
\begin{subequations}
	\begin{align}
	\frac{\rho_{1(1)}}{\rho_{3(1)}} &= \frac{\rho_{1(0)}}{\rho_{3(0)}} \frac{\mathrm{Odd} (T, \mathbf{A}_1, 1, 0; \boldsymbol{\gamma})}{\mathrm{Odd} (T, \mathbf{A}_1, 0,0; \boldsymbol{\gamma})}, \label{sub_eq:1}\\
	\frac{\rho_{2(1)}}{\rho_{3(1)}} &= \frac{\rho_{2(0)}}{\rho_{3(0)}} \frac{\mathrm{Odd} (T, \mathbf{A}_1, 0, 1; \boldsymbol{\gamma})}{\mathrm{Odd} (T, \mathbf{A}_1, 0,0; \boldsymbol{\gamma})}. \label{sub_eq:2}
	\end{align}
\end{subequations}
By using the fact that $\rho_{3(0)} = 1 - \rho_{1(0)} - \rho_{2(0)}$, from  (\ref{sub_eq:1}), we obtain
\begin{equation}
\rho_{1(0)} = \frac{\rho_{1(1)}(1 - \rho_{2(0)})}{\rho_{3(1)}\frac{\mathrm{Odd} (T, \mathbf{A}_1, 1, 0; \boldsymbol{\gamma})}{\mathrm{Odd} (T, \mathbf{A}_1, 0,0; \boldsymbol{\gamma})} + \rho_{1(0)}}
\label{sub_eq:3} 
\end{equation}
Substituting (\ref{sub_eq:3}) into (\ref{sub_eq:2}), leads to the following equation
\[
\frac{\rho_{2(1)}}{\rho_{3(1)}} = \frac{\rho_{2(0)}}{1 - \frac{\rho_{1(1)}(1 - \rho_{2(0)})}{\rho_{3(1)}\frac{\mathrm{Odd} (T, \mathbf{A}_1, 1, 0; \boldsymbol{\gamma})}{\mathrm{Odd} (T, \mathbf{A}_1, 0,0; \boldsymbol{\gamma})} + \rho_{1(0)}} - \rho_{2(0)}} \frac{\mathrm{Odd} (T, \mathbf{A}_1, 0, 1; \boldsymbol{\gamma})}{\mathrm{Odd} (T, \mathbf{A}_1, 0,0; \boldsymbol{\gamma})}.
\]
After some algebra, we obtain that $\rho_{2(0)}$ is equal to
\[
\frac{\rho_{2(1)} \frac{\mathrm{Odd} (T, \mathbf{A}_1, 1, 0; \boldsymbol{\gamma})}{\mathrm{Odd} (T, \mathbf{A}_1, 0,0; \boldsymbol{\gamma})}}{\rho_{1(1)} \frac{\mathrm{Odd} (T, \mathbf{A}_1, 0, 1; \boldsymbol{\gamma})}{\mathrm{Odd} (T, \mathbf{A}_1, 0,0; \boldsymbol{\gamma})} + \rho_{2(1)} \frac{\mathrm{Odd} (T, \mathbf{A}_1, 1, 0; \boldsymbol{\gamma})}{\mathrm{Odd} (T, \mathbf{A}_1, 0,0; \boldsymbol{\gamma})} + \rho_{3(1)} \frac{\mathrm{Odd} (T, \mathbf{A}_1, 1, 0; \boldsymbol{\gamma})}{\mathrm{Odd} (T, \mathbf{A}_1, 0, 0; \boldsymbol{\gamma})}\frac{\mathrm{Odd} (T, \mathbf{A}_1, 0, 1; \boldsymbol{\gamma})}{\mathrm{Odd} (T, \mathbf{A}_1, 0,0; \boldsymbol{\gamma})}},
\]
and then $\rho_{1(0)}$ is
\[
\frac{\rho_{1(1)} \frac{\mathrm{Odd} (T, \mathbf{A}_1, 0, 1; \boldsymbol{\gamma})}{\mathrm{Odd} (T, \mathbf{A}_1, 0,0; \boldsymbol{\gamma})}}{\rho_{1(1)} \frac{\mathrm{Odd} (T, \mathbf{A}_1, 0, 1; \boldsymbol{\gamma})}{\mathrm{Odd} (T, \mathbf{A}_1, 0,0; \boldsymbol{\gamma})} + \rho_{2(1)} \frac{\mathrm{Odd} (T, \mathbf{A}_1, 1, 0; \boldsymbol{\gamma})}{\mathrm{Odd} (T, \mathbf{A}_1, 0,0; \boldsymbol{\gamma})} + \rho_{3(1)} \frac{\mathrm{Odd} (T, \mathbf{A}_1, 1, 0; \boldsymbol{\gamma})}{\mathrm{Odd} (T, \mathbf{A}_1, 0, 0; \boldsymbol{\gamma})}\frac{\mathrm{Odd} (T, \mathbf{A}_1, 0, 1; \boldsymbol{\gamma})}{\mathrm{Odd} (T, \mathbf{A}_1, 0,0; \boldsymbol{\gamma})}}
\]
In case of the logistic model (\ref{veri:model:2}), it is straightforward to show that
\[
\frac{\mathrm{Odd} (T, \mathbf{A}_1, 1, 0; \boldsymbol{\gamma})}{\mathrm{Odd} (T, \mathbf{A}_1, 0,0; \boldsymbol{\gamma})} = \exp(\lambda_1), \qquad \frac{\mathrm{Odd} (T, \mathbf{A}_1, 0, 1; \boldsymbol{\gamma})}{\mathrm{Odd} (T, \mathbf{A}_1, 0,0; \boldsymbol{\gamma})} = \exp(\lambda_2).
\]
Thus, we get the results in (\ref{expr:rho_0}).

\section{Conditions for asymptotic normality of $\left(\widehat{\boldsymbol{\eta}}^\top, \widehat{\boldsymbol{\gamma}}^\top\right)^\top$}

Recall that  $\mathcal{I}_{\boldsymbol{\eta}}(\boldsymbol{\eta}) = \E \left\{- \frac{\partial u_i(\boldsymbol{\eta})}{\partial \boldsymbol{\eta}^\top} \right\}$, $\mathcal{I}_{s,\boldsymbol{\gamma}}(\boldsymbol{\eta}, \boldsymbol{\gamma}) = \E \left\{- \frac{\partial s_i(\boldsymbol{\gamma}; \boldsymbol{\eta})}{\partial \boldsymbol{\gamma}^\top} \right\}$, $\mathcal{I}_{s,\boldsymbol{\eta}}(\boldsymbol{\eta}, \boldsymbol{\gamma}) = \E \left\{- \frac{\partial s_i(\boldsymbol{\gamma}; \boldsymbol{\eta})}{\partial \boldsymbol{\eta}^\top}
\right\}$ and consider the conditions: 
\label{app:B}
\begin{enumerate}[(D1)]
	\item  $\Pr \left(D_k = 1| T, \mathbf{A}_{1}, \mathbf{A}_{2} = \boldsymbol{a}_{2}^{(1)} \right) \ne \Pr \left(D_k = 1| T, \mathbf{A}_{1}, \mathbf{A}_{2} = \boldsymbol{a}_{2}^{(2)} \right)$ for given two values $\boldsymbol{a}_{2}^{(1)} \ne \boldsymbol{a}_{2}^{(2)}$ of the (instrumental) variable $\mathbf{A}_{2}$, and
	\[\Pr(V = 1|T, \mathbf{A}, D_1, D_2) = \Pr(V = 1|T, \mathbf{A}_1, D_1, D_2) = \pi(T, \mathbf{A}_1, D_1, D_2; \boldsymbol{\gamma}_0)\] 
	with $\mathbf{A} = \left(\mathbf{A}^\top_1, \mathbf{A}^\top_2\right)^\top$;
	\item $\pi(T_i, \mathbf{A}_{1i}, D_{1i}, D_{2i}; \boldsymbol{\gamma})$ is a continuous function of $\boldsymbol{\gamma}$ with first and second derivatives continuous in an open set containing $\boldsymbol{\gamma}_0$ as an interior point;
	\item the variables  $(T_i, \mathbf{A}^\top_i, D_{1i}, D_{2i}, D_{3i}, V_i)^\top$, $i=1,\ldots, n$, are independent and identically distributed;
	\item the parameter spaces $\Theta$ for $\boldsymbol{\eta}$ and $\Omega$ for $\boldsymbol{\gamma}$ are compact and have finite dimension;
	\item $\pi(T_i, \mathbf{A}_{1i}, D_{1i}, D_{2i}; \boldsymbol{\gamma})$ is bounded away from 0.
	\item the MLE , $\widehat{\boldsymbol{\gamma}}_n$ say,
	is consistent and asymptotically normal;
	\item $\E \left\{ U(\boldsymbol{\eta}) \right\} = \boldsymbol{0}$ has a unique solution $\boldsymbol{\eta}_0 \in \Theta$;
	\item $\mathcal{I}_{\boldsymbol{\eta}}(\boldsymbol{\eta}_0)$ exists and is invertible;
	\item there exists a neighborhood $\mathcal{N}$ of $\boldsymbol{\eta}_0$ such that the quantiles $\sup_{\boldsymbol{\eta} \in \mathcal{N}} \|U(\boldsymbol{\eta})\|$, $\sup_{\boldsymbol{\eta} \in \mathcal{N}} \|\frac{\partial}{\partial \boldsymbol{\eta}^\top} U(\boldsymbol{\eta})\|$ and $\sup_{\boldsymbol{\eta} \in \mathcal{N}} \|U(\boldsymbol{\eta})U(\boldsymbol{\eta})^\top\|$ have finite expectations, with $\| \mathbf{X}\| \equiv \sum_{i}\sum_{j}X_{ij}^2$.
\end{enumerate} 
By Theorem 1 in \citet{riddles2016propensity}, under the regularity conditions (D1)--(D9), the solution $\left(\widehat{\boldsymbol{\eta}}^\top, \widehat{\boldsymbol{\gamma}}^\top\right)^\top$ of
\[
\begin{pmatrix}
U(\boldsymbol{\eta}) \\ \bar{S}(\boldsymbol{\gamma};\boldsymbol{\eta})
\end{pmatrix} = \begin{pmatrix}
\boldsymbol{0} \\ \boldsymbol{0}
\end{pmatrix}
\]
is a consistent estimator of $\left(\boldsymbol{\eta}^\top_0, \boldsymbol{\gamma}^\top_0\right)^\top$, and satisfies
\begin{equation}
\sqrt{n} \begin{pmatrix}
\widehat{\boldsymbol{\eta}} - \boldsymbol{\eta}_0 \\
\widehat{\boldsymbol{\gamma}} - \boldsymbol{\gamma}_0
\end{pmatrix} = \begin{pmatrix}
\mathcal{I}_{\boldsymbol{\eta}}(\boldsymbol{\eta}_0) & \boldsymbol{0} \\
\mathcal{I}_{s,\boldsymbol{\eta}}(\boldsymbol{\eta}_0, \boldsymbol{\gamma}_0) & \mathcal{I}_{s,\boldsymbol{\gamma}}(\boldsymbol{\eta}_0, \boldsymbol{\gamma}_0)
\end{pmatrix}^{-1} \frac{1}{\sqrt{n}}\sum_{i = 1}^{n} \begin{pmatrix}
V_iu_i(\boldsymbol{\eta}_0) \\ s_i(\boldsymbol{\gamma}_0; \boldsymbol{\eta}_0)
\end{pmatrix} + o_p(1).
\label{asy:gamma:eta}
\end{equation}
Thus, by the Central Limit Theorem, $\left(\widehat{\boldsymbol{\eta}}^\top, \widehat{\boldsymbol{\gamma}}^\top\right)^\top$ is asymptotically normal. Moreover, 
\begin{equation}
\sqrt{n}\left(\widehat{\boldsymbol{\gamma}} - \boldsymbol{\gamma}_0\right) \stackrel{d}{\longrightarrow} \mathcal{N}\left(\boldsymbol{0}, \Sigma_{\boldsymbol{\gamma}}\right),
\label{gamma:asy}
\end{equation}
where
\[
\Sigma_{\boldsymbol{\gamma}} = \mathcal{I}^{-1}_{s,\boldsymbol{\gamma}}(\boldsymbol{\eta}_0, \boldsymbol{\gamma}_0) \mathrm{Cov}\left\{s_i(\boldsymbol{\gamma}_0; \boldsymbol{\eta}_0) - \mathcal{I}_{s,\boldsymbol{\eta}}(\boldsymbol{\eta}_0, \boldsymbol{\gamma}_0) \mathcal{I}^{-1}_{\boldsymbol{\eta}}(\boldsymbol{\eta}_0) u_i(\boldsymbol{\eta}_0) \right\} \mathcal{I}^{-1}_{s,\boldsymbol{\gamma}}(\boldsymbol{\eta}_0, \boldsymbol{\gamma}_0)
\]
and
\begin{equation}
\sqrt{n}\left(\widehat{\boldsymbol{\eta}} - \boldsymbol{\eta}_0\right) \stackrel{d}{\longrightarrow} \mathcal{N}\left(\boldsymbol{0}, \Sigma_{\boldsymbol{\eta}}\right),
\end{equation}
where $\Sigma_{\boldsymbol{\eta}} =  \mathcal{I}^{-1}_{\boldsymbol{\eta}}(\boldsymbol{\eta}_0)\mathrm{Cov}(u_i(\boldsymbol{\eta}_0)) \mathcal{I}^{-1}_{\boldsymbol{\eta}}(\boldsymbol{\eta}_0)$. Consistent estimators $\widehat{\Sigma}_{\boldsymbol{\gamma}}$ of $\Sigma_{\boldsymbol{\gamma}}$ and $\widehat{\Sigma}_{\boldsymbol{\eta}}$ of $\Sigma_{\boldsymbol{\eta}}$ can be obtained by
\begin{eqnarray}
\widehat{\Sigma}_{\boldsymbol{\eta}} &=& n \left\{\sum_{i = 1}^{n}\frac{\partial}{\partial \boldsymbol{\eta}^\top} V_iu_i(\widehat{\boldsymbol{\eta}})\right\}^{-1} \left\{\sum_{i = 1}^{n} V_iu_i(\widehat{\boldsymbol{\eta}}) u_i(\widehat{\boldsymbol{\eta}})^{\top} \right\} \left\{\sum_{i = 1}^{n}\frac{\partial}{\partial \boldsymbol{\eta}^\top} V_iu_i(\widehat{\boldsymbol{\eta}})\right\}^{-\top}, \nonumber \\
\widehat{\Sigma}_{\boldsymbol{\gamma}} &=& n\left\{\sum_{i = 1}^{n} \frac{\partial}{\partial \boldsymbol{\gamma}^\top} s_i(\widehat{\boldsymbol{\gamma}}; \widehat{\boldsymbol{\eta}})\right\}^{-1} \left\{\sum_{i = 1}^{n} f_i(\widehat{\boldsymbol{\eta}}, \widehat{\boldsymbol{\gamma}}) f_i(\widehat{\boldsymbol{\eta}}, \widehat{\boldsymbol{\gamma}})^{\top} \right\} \left\{\sum_{i = 1}^{n} \frac{\partial}{\partial \boldsymbol{\gamma}^\top} s_i(\widehat{\boldsymbol{\gamma}}; \widehat{\boldsymbol{\eta}})\right\}^{-\top}, \label{est:Sigma:gamma}
\end{eqnarray}
with $f_i(\widehat{\boldsymbol{\eta}}, \widehat{\boldsymbol{\gamma}}) = s_i(\widehat{\boldsymbol{\gamma}}; \widehat{\boldsymbol{\eta}}) - \left\{\sum_{i = 1}^{n} \frac{\partial}{\partial \boldsymbol{\eta}^\top} s_i(\widehat{\boldsymbol{\gamma}}; \widehat{\boldsymbol{\eta}})\right\} \left\{\sum_{i = 1}^{n}\frac{\partial}{\partial \boldsymbol{\eta}^\top} V_iu_i(\widehat{\boldsymbol{\eta}})\right\}^{-1} V_i u_i(\widehat{\boldsymbol{\eta}})$.

\section{Estimation Procedure}
\label{app:C}
\tikzstyle{decision} = [diamond, draw, text width=2cm, text badly centered, inner sep=0pt]
\tikzstyle{block} = [rectangle, draw, text width=4cm, text centered, rounded corners, minimum height=2em]
\tikzstyle{line} = [draw, -latex']
\tikzstyle{cloud} = [draw, ellipse, text width=3cm, minimum height=2em, text centered]
\begin{center}
	\begin{scriptsize}
		\begin{tikzpicture}[node distance=2cm, align=center]
		\node [block, text width=4cm] (verification) {Modelling Verification Process};
		\node [cloud, text width=4cm, left of=verification, below of=verification, xshift = -4.2cm, yshift = 0.5cm] (disease) {Modelling Disease Process for Verified Subjects (eventually using instrumental veriables)};
		\node [cloud, text width=0.5cm, below of=disease, yshift = 0.5cm] (fitdisease1) {$\widehat{\boldsymbol{\eta}}$};
		\node [cloud, text width=1.8cm, right of=fitdisease1, below of=disease, xshift=0.2cm, yshift = 0.5cm] (fitdisease) {${\rho}_{k(0)i}(\widehat{\boldsymbol{\eta}}, \lambda_1, \lambda_2)$};
		\node [block, text width=4cm, right of=fitdisease, below of=disease, xshift = 4.2cm, yshift = 0.5cm] (meanscore) {Empirical Mean Score Equation}; 
		\node [block, text width=2.5cm, below of=meanscore] (estimates) {$\widehat{\beta}_0$, $\widehat{\beta}_1$, $\widehat{\boldsymbol{\beta}}_{2,1}$, $\widehat{\lambda}_1$, $\widehat{\lambda}_2$};
		\node [block, text width=1.5cm, left of=estimates, below of=fitdisease1,xshift=2cm] (rho1) {$
			{\rho}_{k(1)i}(\widehat{\boldsymbol{\eta}})$};
		\node [block, text width=2cm, left of=estimates, below of=estimates,xshift=-1.5cm] (rho0) {$
			{\rho}_{k(0)i}(\widehat{\boldsymbol{\eta}}, \widehat{\lambda}_1, \widehat{\lambda}_2)$};
		\node [block, text width=0.5cm, right of=estimates, below of=estimates,xshift=-0.5cm] (pi) {$\widehat{\pi}_i$};
		\node [decision, below of=rho0,yshift = 0cm, xshift = -2.7cm] (fi) {VUS-FI Estimator};
		\node [decision, below of=rho0,yshift = 0cm] (msi) {VUS-MSI Estimator};
		\node [decision,below of=pi, left of=pi,yshift= 0cm,xshift= -0.3cm] (pdr) {VUS-PDR Estimator};
		\node [decision,below of=pi, right of=pi,yshift= 0cm,xshift= -1.5cm] (ipw) {VUS-IPW Estimator};
		\path [line] (verification) -- (meanscore);
		\path [line] (disease) -- (fitdisease1);
		\path [line] (fitdisease1) -- (fitdisease);
		\path [line] (fitdisease) -- (meanscore);
		\path [line] (meanscore) -- (estimates);
		\path [line] (fitdisease1) -- (rho1);
		\path [line] (estimates) -| (rho0);
		\path [line] (estimates) -| (pi);
		\path [line] (rho1) -- (fi);
		\path [line] (rho0) -| (fi);
		\path [line] (rho0) -- (msi);
		\path [line] (rho0) -| (pdr);
		\path [line] (pi) -| (ipw);	
		\path [line] (pi) -| (pdr);
		\end{tikzpicture}	
	\end{scriptsize}
\end{center}

\bibliographystyle{apalike}
\bibliography{refs_VUS_NI_IV}
\end{document}